\documentclass{article}
\usepackage{graphicx} 
\usepackage{booktabs} 
\usepackage[ruled]{algorithm2e} 
\usepackage{algorithmic}
\usepackage{threeparttable}
\usepackage{natbib}
\usepackage{amsmath}
\usepackage{amssymb}
\usepackage{amsthm}
\usepackage{geometry}
\usepackage{setspace}
\usepackage{threeparttable}
\usepackage{url}
\usepackage{hyperref}
\newtheorem{theorem}{Theorem}

\newtheorem{remark}{Remark}

\title{Multi-Dimensional Matching in Market Design}
\author{Irene Aldridge}

\begin{document}

\maketitle
\begin{abstract}
    This paper proposes a computationally efficient mechanism for multi-dimensional matching markets where agents report preferences over object features rather than complete utility assessments. We use Singular Value Decomposition (SVD) to identify the principal direction of variation in feature space and match agents to objects along this dimension, reducing a complex multi-dimensional problem to an effectively one-dimensional problem solvable in $O(N \log N)$ time.

We show that when data exhibit low effective dimensionality, our mechanism approximately maximizes Nash Social Welfare, satisfies distributional truthfulness, and achieves symmetry. We establish a novel connection between Nash Social Welfare and Geometric Distributionally Robust Optimization, providing robustness guaranties. Numerical experiments demonstrate that our approach achieves 99\% optimal welfare while running three orders of magnitude faster than direct optimization. The framework applies naturally to school choice, labor markets, and course allocation, where feature-based elicitation reduces the cognitive burden on agents.
\end{abstract}
\section{Introduction}

Matching markets pervade modern economies: students to schools (\cite{AbdulkadirogluEtAl2005}), workers to jobs (\cite{HyllandZeckhauser1979}), patients to organ donors (\cite{Leshno2022}), and now AI agents to products (\cite{allouah2025aiagentbuyingevaluation}). Traditional matching mechanisms require agents to rank complete objects—a student ranks entire schools, a worker ranks complete job offers. This creates a fundamental challenge: when objects are characterized by multiple attributes (salary, location, work culture, growth opportunities), agents must somehow collapse these dimensions into a single ranking. For a district with 50 schools varying across academics, arts, athletics, safety, and location, families must perform complex multi-attribute comparisons across hundreds of potential rankings.

We propose an alternative: agents report preferences over individual features (e.g., "I value mathematics instruction at 8/10, arts at 5/10"), and objects are characterized by feature vectors (e.g., "this school scores 7.5 on math, 6.2 on arts"). This feature-based elicitation is cognitively simpler—agents consider one dimension at a time rather than comparing complete alternatives. However, it creates a mechanism design challenge: how do we aggregate multi-dimensional preferences into efficient, fair, and incentive-compatible allocations?

We use Singular Value Decomposition (SVD) to identify the principal direction of variation in feature space and reduce the multi-dimensional problem to an effectively one-dimensional problem
solvable in $O(N\log N)$ time. Let $\sigma_1 \geq \sigma_2 \geq \cdots \geq \sigma_X \geq 0$ denote
the singular values of the $J \times X$ object-feature matrix $F$. When data exhibit low effective dimensionality, formally, when
\begin{equation}
  \rho_1 \;=\; \frac{\sigma_1^2}{\displaystyle\sum_{\ell=1}^{X}
  \sigma_\ell^2} \;\gg\; \frac{1}{X},
  \tag{$*$}
\end{equation}
so that the first singular value dominates and its square accounts for the bulk of total feature variance, our mechanism approximately maximizes Nash Social Welfare while maintaining distributional
truthfulness and exact symmetry. The shorthand $\sigma_1 \gg \sigma_2$ used throughout the paper
refers to this condition; Theorem~\ref{thm:efficiency} quantifies the
welfare approximation ratio as $(1 - O(\sigma_2^2/\sigma_1^2))$.

Our key contributions are: (1) We formalize a tractable multi-dimensional matching framework where agents report feature preferences and SVD-based dimensionality reduction enables efficient allocation (Section \ref{sec:methodology}). (2) We prove that the mechanism maximizes Nash Social Welfare relative to random allocation when $\sigma_1 \gg \sigma_2$ (defined precisely below), satisfies distributional truthfulness measured by the Kolmogorov-Smirnov distance, and achieves exact symmetry (Section \ref{sec:methodology}). We also establish a novel connection between Nash Social Welfare and Geometric Distributionally Robust Optimization, providing robustness guaranties (Theorem \ref{thm:gdro}). (3) Numerical experiments demonstrate 99\% optimal welfare achieved three orders of magnitude faster than direct optimization.

\cite{ZHOU1990123} proves no mechanism can simultaneously achieve exact efficiency, truthfulness, and symmetry with cardinal utilities. Recent work by \cite{abebe2020truthfulcardinalmechanismonesided} achieves approximate versions of all three properties through random sampling and configuration LPs. We provide a complementary approach: while Abebe et al. require agents to report complete utilities over objects and use randomized mechanisms, we accept multi-dimensional feature preferences and use deterministic spectral methods. Our distributional notion of truthfulness differs from their truthfulness-in-expectation, and we provide new connections to robust optimization. Both approaches achieve similar goals through different techniques—ours is particularly suited when feature-based elicitation reduces cognitive burden and effective dimensionality is low.

The applications of our method include school choice (families report preferences over educational dimensions), labor markets (workers report job characteristic preferences), and course
allocation (students report course attribute preferences such as rigor, workload, and topic area).
The course allocation application connects naturally to recent work by \citet{gans2026informative}, who show that grades are uninformative for cross-course comparisons of student ability whenever courses differ in difficulty and students self-select---an \emph{ex-post} identification problem.

Our mechanism addresses the \emph{ex-ante} counterpart: by eliciting student preferences over course features rather than full rankings,
and matching along the first singular vector of the course feature matrix, we can allocate students to courses in a way that is both computationally efficient and approximately welfare-maximizing,
while the eigengrade construction of \citeauthor{gans2026informative} can subsequently be used to evaluate and compare student performance
across the resulting assignments.

\section{Related Literature}\label{sec:related-lit}

\paragraph{Classical matching and Nash Social Welfare.}
\citet{GaleShapley1962} introduced stable matching for two-sided markets with ordinal preferences.
\citet{HyllandZeckhauser1979} (HZ) pioneered cardinal utilities in one-sided matching by endowing agents with fictitious currency, allowing them to purchase probability shares of objects at market-clearing prices. \cite{HyllandZeckhauser1979} also proved via Kakutani's fixed-point theorem that a Walrasian equilibrium always exists. The resulting allocation is ex-ante Pareto efficient and envy-free, with incentive compatibility holding in the large. 

\citet{gul2024efficient} (GPZ) extend this pseudo-market approach to settings where agents demand \emph{bundles} of objects: under $M^\natural$-concavity of valuation functions (equivalent to the gross substitutability condition of \citet{kelso1982job}), they establish the existence of a Walrasian equilibrium that is Pareto efficient and asymptotically incentive compatible. The equilibrium holds even when individuals may demand more than one object at a time.

\citet{Nash1950} introduced Nash Social Welfare (NSW), the geometric mean of agents' utilities, as a solution that balances efficiency and fairness. \citet{Caragiannis2019} prove NSW-maximizing allocations are Pareto efficient and satisfy proportionality, though computing them is generally NP-hard \citep{ColeGkatzelis2018}.
 
\paragraph{Computational barriers of pseudo-market mechanisms.}
The practical deployment of HZ and GPZ is  constrained by a fundamental computational obstacle.
Equilibrium existence in both mechanisms is established via Kakutani's fixed-point theorem, which is non-constructive: it guarantees existence
but yields no efficient algorithm. 
\citet{vazirani2021complexity} showed that computing an exact HZ equilibrium lies in FIXP and that equilibria can be inherently irrational,
ruling out PPAD membership. \citet{chen2022hardness} subsequently proved that finding even an $\epsilon$-approximate HZ equilibrium is PPAD-complete, that is, in the same complexity class as Nash equilibrium computation. This complexity holds even for polynomially small $\epsilon$ and agents with at most four distinct utility values. No analogous polynomial-time algorithm is known for GPZ.

Our mechanism sidesteps this barrier entirely: the Algorithm~\ref{alg:svd} reduces equilibrium computation to a single SVD followed by a sort,
running in $O(\min(J^2 X, JX^2) + I\log I + J\log J)$ time with a provable termination guarantee.
The cost is a welfare approximation ratio of
$(1 - O(\sigma_2^2/\sigma_1^2))$ rather than exact Pareto efficiency.
 
\paragraph{Multi-dimensional mechanism design and approximate mechanisms.}
\citet{ZHOU1990123} proves no mechanism can simultaneously achieve exact efficiency, truthfulness, and symmetry with cardinal utilities,
motivating approximate mechanisms.
\citet{ChawlaEtAl2010} address multi-parameter mechanism design through sequential posted pricing: they decompose multi-dimensional problems into sequences of single-dimensional problems using prophet inequalities, achieving constant-factor approximations. However, their sequential approach requires exogenously ordering dimensions and creates path-dependence. 

\citet{abebe2020truthfulcardinalmechanismonesided} achieve constant-factor approximations to NSW while maintaining truthfulness-in-expectation and approximate symmetry through random sampling: they partition agents and objects into groups, solve smaller subproblems optimally via configuration LPs, then aggregate. Their mechanism assumes agents report complete utilities $u_{ij}$ for
each object rather than decomposed feature preferences.
 
\paragraph{Positioning relative to pseudo-market mechanisms.}
Our work relates to HZ and GPZ along four axes.

\emph{Input format}: HZ and GPZ require agents to report cardinal utilities over all $J$ objects (elicitation cost $O(J)$ per agent); we accept $X$ feature weights per agent (cost $O(X)$, with $X \ll J$ in the school-choice, labor-market, and course-allocation settings we target.

\emph{Preference class}: our additive utility $U_{ij} = u_i \cdot f_j$ automatically satisfies gross substitutability. Thus, we operate within
the GPZ preference class, but impose the additional structure of additive separability, which GPZ do not require.

\emph{Computation}: PPAD-completeness of approximate HZ equilibria \citep{chen2022hardness} means that for markets with $I = 10{,}000$ students and $J = 200$ schools, computing an HZ or GPZ equilibrium via fixed-point heuristics is infeasible in practice;
Algorithm~\ref{alg:svd} completes the same market in under two seconds on commodity hardware.

\emph{Welfare}: HZ and GPZ achieve exact Pareto efficiency under their respective conditions; our mechanism achieves a $(1 - O(\sigma_2^2/\sigma_1^2))$ approximation to NSW, which our experiments show exceeds 99\% in typical markets. Table~\ref{tab:comparison} summarizes these tradeoffs alongside \citet{abebe2020truthfulcardinalmechanismonesided} and \citet{DEVANUR2015103}.

\cite{abebe2020truthfulcardinalmechanismonesided} achieve constant-factor approximations to NSW while maintaining truthfulness-in-expectation and approximate symmetry through random sampling: they partition agents and objects into groups, solve smaller subproblems optimally via configuration LPs, then aggregate. Their mechanism assumes agents report complete utilities $u_ij$ for each object rather than decomposed feature preferences.

\textit{Our contribution relative to \cite{abebe2020truthfulcardinalmechanismonesided}:} While both approaches achieve efficiency, truthfulness, and symmetry approximately, we differ in four ways: (1) \textbf{Input format}: we accept X-dimensional feature preferences and object feature vectors, computing utilities as $u_i \cdot f_j$, rather than requiring complete utility reports—practically, reporting 5 feature preferences is simpler than valuing 100 objects. (2) \textbf{Method}: we use deterministic SVD-based dimensionality reduction running in $O(N log N + min(J^2X, JX^2))$ time, versus randomized sampling requiring multiple configuration LP solves. (3) \textbf{Truthfulness}: we introduce distributional truthfulness based on Kolmogorov-Smirnov distance between reported and true preference distributions, versus truthfulness-in-expectation. (4) \textbf{Theory}: we establish that NSW corresponds to Geometric Distributionally Robust Optimization (\cite{LiuEtAl2022_NEURIPS2022_da535999}), providing robustness to distributional uncertainty. We view our contribution as complementary—applying spectral techniques to feature-based preferences where dimensionality reduction is natural. Table \ref{tab:literature_comparison} synthesizes how our approach compares with key related work across multiple dimensions. 

\begin{table*}[t]
\centering
\caption{Comparison of Our Approach to Related Work on Matching Mechanisms}
\label{tab:literature_comparison}
\resizebox{\textwidth}{!}{
\begin{tabular}{@{}lp{2.2cm}p{1.5cm}p{2.5cm}p{2cm}p{2.2cm}p{1.8cm}p{2.2cm}@{}}
\toprule
\textbf{Work} & \textbf{Input Format} & \textbf{Dimensions} & \textbf{Method} & \textbf{Efficiency} & \textbf{Truthfulness} & \textbf{Symmetry} & \textbf{Computation} \\
\midrule
\cite{HyllandZeckhauser1979} & Cardinal utilities $u_{ij}$ over objects & Single (1D) & Market clearing with budgets and prices & Ex-ante Pareto optimal & Not guaranteed; strategic bidding possible & Price-based fairness & Solve market equilibrium \\
\addlinespace[0.5em]

\cite{ChawlaEtAl2010} & Multi-parameter valuations over bundles & Multiple params & Sequential posted pricing with prophet inequalities & $O(1)$-approx to optimal & Truthful-in-expectation & Limited guarantees & $O(X)$ sequential problems \\
\addlinespace[0.5em]

\cite{gul2024efficient} &
  Cardinal utilities $u_{ij}$ over
  bundles of objects; no transfers &
  Multiple (bundle demand) &
  Walrasian pseudomarket with stochastic
  consumption; $M^\natural$-concave
  valuations & Ex-ante Pareto optimal
  under $M^\natural$-concavity
  (exact) & Incentive compatible
  in the large (asymptotic) &
  Price-based fairness &
  Solve market
  equilibrium via
  Kakutani fixed
  point (no
  poly-time
  algorithm known) \\

\cite{DEVANUR2015103} & Cardinal utilities $u_{ij}$ over objects & Single (1D) & Random assignment within config LP & 2-approx to NSW & Not primary focus & Envy-free in expectation & Solve configuration LP \\
\addlinespace[0.5em]

\cite{abebe2020truthfulcardinalmechanismonesided} & Cardinal utilities $u_{ij}$ over objects & Single (1D) & Random sampling + configuration solving & $O(1)$-approx to NSW & Truthful-in-expectation & Approximately symmetric & Multiple config LPs \\
\addlinespace[0.5em]

\textbf{This Work} & Feature preferences $u_{ix}$; object features $f_{jx}$ & Multiple $\rightarrow$ 1 (via SVD) & SVD projection + linear matching & NSW maximizing relative to random & Distributional (KS distance) & Symmetric by construction & $O(N \log N + \min(N^2X, NX^2))$ \\
\bottomrule
\end{tabular}
}
\small{
\textbf{Column Descriptions:} \textit{Input Format}: What information agents report and how objects are characterized; \textit{Dimensions}: Whether preferences/features are single-dimensional or multi-dimensional; \textit{Method}: Core algorithmic approach used by the mechanism; \textit{Efficiency}: Type and strength of efficiency guarantee (exact vs.\ approximate); \textit{Truthfulness}: Type of incentive compatibility guarantee; \textit{Symmetry}: Whether agents with identical preferences receive identical treatment; \textit{Computation}: Asymptotic complexity or problem class that must be solved.
}
\vspace{1em}
\end{table*}

\section{Model and Definitions}\label{sec:preliminaries}

\textbf{The matching problem.} We consider a one-sided matching market with agents $\mathcal{I} = \{1, \ldots, I\}$ and objects $\mathcal{J} = \{1, \ldots, J\}$. Each object $j$ has capacity $M_j \in \mathbb{Z}_+$ with $\sum_{j} M_j = I$. Unlike classical matching where agents report preferences directly over objects, we assume that objects are characterized by features $\mathcal{X} = \{1, \ldots, X\}$ and agents report preferences over these features.

\textbf{Feature-based representation;} Each object $j$ is characterized by a feature vector $\mathbf{f}_j = (f_{j1}, \ldots, f_{jX}) \in \mathbb{R}^X$, where $f_{jx}$ represents the value of the object $j$' in the feature $x$. Each agent $i$ has a preference vector $\mathbf{u}_i = (u_{i1}, \ldots, u_{iX}) \in \mathbb{R}^X$, where $u_{ix}$ represents agent $i$'s marginal utility for feature $x$. Agent $i$'s utility for object $j$ is the inner product $U_{ij} = \mathbf{u}_i \cdot \mathbf{f}_j = \sum_{x} u_{ix} f_{jx}$, assuming additive separability across features.

\textbf{Allocations.} An allocation is a probability distribution $\mathbf{P} = (p_{ij})_{i \in \mathcal{I}, j \in \mathcal{J}}$ where $p_{ij} \in [0,1]$ represents the probability that agent $i$ receives the object $j$, satisfying $\sum_{j} p_{ij} = 1$ for all $i$ (each agent receives one object) and $\sum_{i} p_{ij} = M_j$ for all $j$ (capacity constraints). Let $\mathcal{P}$ denote the feasible allocation set. Agent $i$'s expected utility is $\mathbb{E}[U_i|\mathbf{P}] = \sum_{j} p_{ij} U_{ij}$.

\textbf{Reported preferences and mechanisms.} Each agent $i$ reports a preference vector $\mathbf{w}_i = (w_{i1}, \ldots, w_{iX}) \in \mathbb{R}^X$, which may differ from true preferences $\mathbf{u}_i$. A mechanism $\mu: (\mathbb{R}^X)^I \to \mathcal{P}$ maps the reported preference profiles $\mathbf{W} = (\mathbf{w}_1, \ldots, \mathbf{w}_I)$ to feasible allocations.

\textbf{Desiderata.} We evaluate mechanisms on three criteria:

\begin{enumerate}
\item \emph{Efficiency via Nash Social Welfare}: The disagreement point for agent $i$ is $o_i = \frac{1}{J}\sum_{j} U_{ij}$, the expected utility under uniform random assignment. The Nash Social Welfare of allocation $\mathbf{P}$ given true preferences $\mathbf{U}$ is 
\begin{equation}
\text{NSW}(\mathbf{P}|\mathbf{U}) = \prod_{i} (\mathbb{E}[U_i|\mathbf{P}] - o_i),
\end{equation}
provided $\mathbb{E}[U_i|\mathbf{P}] \geq o_i$ for all $i$ (otherwise NSW $= 0$). NSW-maximizing allocations are Pareto efficient and satisfy proportionality (each agent receives at least their random allocation value). An allocation $\mathbf{P}^*$ maximizes NSW if $\mathbf{P}^* \in \arg\max_{\mathbf{P} \in \mathcal{P}} \text{NSW}(\mathbf{P}|\mathbf{U})$.

\item \emph{Distributional truthfulness}: We model the reported preferences as noisy observations: $w_{ix} = u_{ix} + \epsilon_{ix}$ where $\epsilon_{ix} \sim \mathcal{N}(\mu_i, \sigma_i^2)$ represents the agent $i$'s misreporting error for the feature $x$. Let $F_{\mathbf{u}_i}$ and $F_{\mathbf{w}_i}$ denote the empirical CDFs of true and reported preferences. The Kolmogorov-Smirnov distance is $D_i = \sup_t |F_{\mathbf{w}_i}(t) - F_{\mathbf{u}_i}(t)|$. A mechanism is $(\lambda, \delta)$-distributionally truthful if, when agents report truthfully, $\mathbb{P}(D_i > \lambda) \leq 2e^{-2\lambda^2 X}$ holds with probability at least $1-\delta$. By the Dvoretzky-Kiefer-Wolfowitz-Massart inequality, if $\mathbf{w}_i$ is drawn from the same distribution as $\mathbf{u}_i$, this bound holds automatically---truthful reporting makes reported and true preferences statistically indistinguishable.

\item \emph{Symmetry}: A mechanism $\mu$ is symmetric if for any two agents $i, i'$ with identical reported preferences ($\mathbf{w}_i = \mathbf{w}_{i'}$), the induced allocations are identical: $\mu(\mathbf{W})_{ij} = \mu(\mathbf{W})_{i'j}$ for all $j$. This ensures equal treatment of equals.
\end{enumerate}

\cite{ZHOU1990123} proves no mechanism can simultaneously achieve exact efficiency, exact truthfulness, and exact symmetry with cardinal utilities, motivating our search for approximate satisfaction of all three properties.

\section{Proposed Methodology}\label{sec:methodology}

\textbf{Algorithm overview.} We reduce the multi-dimensional matching problem to one dimension using Singular Value Decomposition (SVD). Given the feature matrix $\mathbf{F} \in \mathbb{R}^{J \times X}$ with rows $\mathbf{f}_j$ and the reported preference matrix $\mathbf{W} \in \mathbb{R}^{I \times X}$ with rows $\mathbf{w}_i$, we compute the SVD $\mathbf{F} = \mathbf{U}\mathbf{\Sigma}\mathbf{V}^T$ where $\mathbf{V} \in \mathbb{R}^{X \times X}$ has orthonormal columns $\mathbf{v}_1, \ldots, \mathbf{v}_X$ (right singular vectors) and $\mathbf{\Sigma}$ has diagonal entries $\sigma_1 \geq \sigma_2 \geq \cdots \geq \sigma_X \geq 0$ (singular values). The first singular vector $v_1$ identifies the direction of maximum
variance in the feature space; the ratio $\rho_1 = \sigma_1^2 / \sum_\ell \sigma_\ell^2$ formalizes the low-dimensionality condition ($*$) stated in the Introduction.

\begin{algorithm}[h]\label{alg:svd}
\caption{SVD-Based Multi-Dimensional Matching}
\begin{algorithmic}[1]
\REQUIRE Feature matrix $\mathbf{F} \in \mathbb{R}^{J \times X}$, preference matrix $\mathbf{W} \in \mathbb{R}^{I \times X}$, capacities $\{M_j\}_{j=1}^J$
\ENSURE Allocation $\mathbf{P} \in \mathcal{P}$
\STATE Compute SVD: $\mathbf{F} = \mathbf{U}\mathbf{\Sigma}\mathbf{V}^T$
\STATE Extract first right singular vector: $\mathbf{v}_1 \in \mathbb{R}^X$
\STATE Project features and preferences: $\tilde{f}_j = \mathbf{f}_j \cdot \mathbf{v}_1$, $\tilde{w}_i = \mathbf{w}_i \cdot \mathbf{v}_1$ for all $i, j$
\STATE Sort objects: $\tilde{f}_{\pi(1)} \geq \tilde{f}_{\pi(2)} \geq \cdots \geq \tilde{f}_{\pi(J)}$
\STATE Sort agents: $\tilde{w}_{\tau(1)} \geq \tilde{w}_{\tau(2)} \geq \cdots \geq \tilde{w}_{\tau(I)}$
\STATE Match agents to objects in sorted order respecting capacities $M_j$
\end{algorithmic}
\end{algorithm}

\textbf{Interpretation.} The first singular vector $\mathbf{v}_1$ solves $\max_{\|\mathbf{v}\|=1} \|\mathbf{F}\mathbf{v}\|_2^2 = \max_{\|\mathbf{v}\|=1} \sum_j (\mathbf{f}_j \cdot \mathbf{v})^2$ by Eckart-Young theorem \citep{EckartYoung1936}---it is the unit vector that maximizes the variance of the projected object features. The entries of $\mathbf{v}_1 = (v_{1,1}, \ldots, v_{1,X})^T$ represent the importance of the features: large $|v_{1,x}|$ indicates that the feature $x$ contributes strongly to the differentiation between objects. For example, in school choice with $\mathbf{v}_1 = (0.72, 0.15, 0.09, 0.67)^T$ for features (math, arts, athletics, location), mathematics and location are dominant factors while arts and athletics contribute less.

The spectral structure of Algorithm~\ref{alg:svd} is closely
analogous to the \emph{eigengrade} construction introduced by
\citet{gans2026informative} for a different but related problem: how
to compare student ability across courses of differing difficulty.
In their framework, each student $i$ who takes course $c$ receives a
grade $G_{ic} = a_i - d_c + \varepsilon_{ic}$, where $a_i$ is latent
student ability, $d_c$ is course difficulty, and $\varepsilon_{ic}$ is
noise.
The student--course outcome matrix $G \in \mathbb{R}^{n \times m}$ has
the same low-rank-plus-noise structure as our feature matrix $F$, and
\citeauthor{gans2026informative} show that the leading \emph{left}
singular vector of the student-centred matrix $MG$ (where
$M = I_n - \frac{1}{n}\mathbf{1}\mathbf{1}^\top$ removes the
column means) recovers a consistent estimator of student ability,
provided enrolment overlap is sufficient and $\sigma_1(MG) \gg
\sigma_2(MG)$---exactly the low-dimensionality condition~$(*)$
applied to their matrix.
 
The parallel with our mechanism is precise.
Our feature matrix $F$ plays the role of their outcome matrix $G$;
the leading \emph{right} singular vector $v_1$ of $F$ plays the role
of their eigengrade vector; and both constructions rely on the same
algebraic fact: when the leading singular value dominates, the
corresponding singular vector identifies the principal latent
dimension with approximation error $O(\sigma_2/\sigma_1)$.
The direction $v_1$ we recover captures the dominant axis along which
\emph{objects} are differentiated---just as their eigengrade captures
the dominant axis along which \emph{students} are differentiated.
In both cases the welfare (or identification) loss from using the
rank-1 approximation rather than the true latent structure is bounded
by $O(\sigma_2^2/\sigma_1^2)$, the same ratio that governs the
approximation ratio of Theorem~\ref{thm:efficiency} and the
low-dimensionality condition $\rho_1 = \sigma_1^2 / \sum_\ell
\sigma_\ell^2 \gg 1/X$ in~\eqref{eq:rho1}.
 
This parallel suggests a natural end-to-end spectral pipeline for
course markets.
Algorithm~\ref{alg:svd} is applied \emph{before} the semester to
match students to courses: each student reports feature preferences
$w_i$ (rigor, workload, topic area), the SVD of the course feature
matrix $F$ identifies $v_1$, and students are sorted onto courses
along this direction.
The eigengrade construction of \citeauthor{gans2026informative} is
then applied \emph{after} the semester to the grade matrix $G$ produced by the resulting assignments: the leading left singular
vector of $MG$ recovers difficulty-adjusted student ability
estimates, enabling fair cross-course comparisons for downstream uses
such as graduate admissions or employer screening.
The two spectral operations are complementary: one optimizes allocation, the other enables evaluation, and both require $\sigma_1 \gg \sigma_2$ in their respective matrices to perform well, a condition Section~\ref{sec:empirical_dimensionality}
argues holds routinely in practice.

\textbf{Main theoretical results.} We establish efficiency, truthfulness, and symmetry guaranties.

\begin{theorem}[Efficiency]\label{thm:efficiency}
Let $\mathbf{P}^*$ denote the allocation from Algorithm 1. Under additive utilities $U_{ij} = \mathbf{u}_i \cdot \mathbf{f}_j$ and low effective dimensionality ($\sigma_1 \gg \sigma_2$):
\[
\text{NSW}(\mathbf{P}^*|\mathbf{U}) \geq \left(1 - O\left(\frac{\sigma_2^2}{\sigma_1^2}\right)\right) \max_{\mathbf{P} \in \mathcal{P}} \text{NSW}(\mathbf{P}|\mathbf{U}).
\]
\end{theorem}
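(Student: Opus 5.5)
The plan is to treat the true utility matrix as a rank-one matrix, which Algorithm~1 optimizes, plus a residual whose size is governed by $\sigma_2/\sigma_1$. I would then control the loss in $\log\mathrm{NSW}$, not in each agent's utility separately, so that the constant does not grow with $I$.

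\emph{Step 1: decomposition.} For each object write $\mathbf{f}_j=\tilde f_j\mathbf{v}_1+\mathbf{r}_j$ with $\mathbf{r}_j\perp\mathbf{v}_1$. By the SVD, $\sum_j\|\mathbf{r}_j\|^2=\sum_{\ell\ge2}\sigma_\ell^2$ and $\sum_j\tilde f_j^2=\sigma_1^2$. Let $\bar f=\frac1J\sum_j\mathbf{f}_j$. Agent $i$'s gain over the disagreement point is
\[
g_i(\mathbf{P})=\sum_j p_{ij}\,\mathbf{u}_i\cdot(\mathbf{f}_j-\bar f)=A_i(\mathbf{P})+R_i(\mathbf{P}).
\]
Here $A_i(\mathbf{P})=\tilde u_i\sum_j p_{ij}(\tilde f_j-\bar{\tilde f})$ with $\tilde u_i=\mathbf{u}_i\cdot\mathbf{v}_1$, and $R_i$ collects the residual directions. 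I take reports to be truthful, $\mathbf{w}_i=\mathbf{u}_i$, so $\tilde w_i=\tilde u_i$ and Algorithm~1 sorts on exactly the quantities appearing in $A_i$.

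\emph{Step 2: the rank-one problem.} Define $\Phi_A(\mathbf{P})=\sum_i\log A_i(\mathbf{P})$. I would show that the sorted matching $\mathbf{P}^*$ of Algorithm~1 maximizes $\Phi_A$ over $\mathcal{P}$, or at least satisfies the first-order optimality conditions for it on $\mathcal{P}$. The argument is a rearrangement one. Assortative matching gives agents with $\tilde u_i>0$ objects above the projected mean and agents with $\tilde u_i<0$ objects below it, so every $A_i(\mathbf{P}^*)>0$. Any exchange of two agents' objects that moves $\tilde u_i$ and $\tilde f_j$ out of the same order cannot increase $\sum_i\log A_i$, since the terms are separable in $\log|\tilde u_i|$ and $\log|\tilde f_j-\bar{\tilde f}|$ once all signs are aligned. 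I would check the fractional (non-integral) allocations through the KKT conditions of the concave program $\max_{\mathcal{P}}\Phi_A$.

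\emph{Step 3: second-order comparison.} Let $\hat{\mathbf{P}}$ maximize the true $\log\mathrm{NSW}=\sum_i\log(A_i+R_i)$. Then
\[
\log\mathrm{NSW}(\hat{\mathbf{P}})-\log\mathrm{NSW}(\mathbf{P}^*)
=\bigl[\Phi_A(\hat{\mathbf{P}})-\Phi_A(\mathbf{P}^*)\bigr]+\sum_i\Bigl[\log\bigl(1+\tfrac{R_i}{A_i}\bigr)\Bigr]_{\mathbf{P}^*}^{\hat{\mathbf{P}}}.
\]
The first bracket is $\le0$ by Step 2. Because $\mathbf{P}^*$ is stationary for $\Phi_A$ on the polytope $\mathcal{P}$, the linear part of the perturbation enters only through the displacement $\hat{\mathbf{P}}-\mathbf{P}^*$. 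That displacement is itself of order $\|R/A\|$, because strong concavity of $\Phi_A$ along feasible directions pins $\hat{\mathbf{P}}$ near $\mathbf{P}^*$. Together these make the whole difference quadratic in the perturbation, namely $O\bigl(\sum_i (R_i/A_i)^2\bigr)$. This envelope-type argument is what turns the naive $O(\sigma_2/\sigma_1)$ loss into $O(\sigma_2^2/\sigma_1^2)$.

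\emph{Step 4 (main obstacle): aggregate control independent of $I$.} I need $\sum_i (R_i/A_i)^2\le C\,\sigma_2^2/\sigma_1^2$ for a constant $C$ not depending on $I$, since exponentiating then gives $\mathrm{NSW}(\mathbf{P}^*)\ge e^{-C\sigma_2^2/\sigma_1^2}\max\mathrm{NSW}\ge(1-O(\sigma_2^2/\sigma_1^2))\max\mathrm{NSW}$.

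The first thing I would try is Cauchy--Schwarz in each agent's gain. This bounds $|R_i|$ by $\|\mathbf{u}_i^{\perp}\|$ times the residual norm of the assigned object, and bounds $A_i$ below by $|\tilde u_i|$ times the projected spread of that object. Summing over agents weighted by capacities then relates $\sum_i(R_i/A_i)^2$ to $\|R\|_F^2/\|\tilde F\|_F^2\le\sum_{\ell\ge2}\sigma_\ell^2/\sigma_1^2$. This is $O(\sigma_2^2/\sigma_1^2)$ for fixed $X$.

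That step needs a nondegeneracy condition: a lower bound on $|\tilde u_i|\,|\tilde f_j-\bar{\tilde f}|$ relative to the residual components, uniformly and in aggregate. I would state this explicitly as the formal content of the hypothesis ``low effective dimensionality'' in the theorem, and I expect it to be the delicate point. Without it, agents near the sorting threshold (with $A_i\approx0$) can make the product loss arbitrarily large.
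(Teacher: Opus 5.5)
\textbf{Your Step~2 is false, and the rest of the argument depends on it.} The deterministic sorted matching $\mathbf P^*$ is in general neither a maximizer nor even a stationary point of $\Phi_A$ on $\mathcal P$.

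Whenever all $A_i>0$, you have $\Phi_A(\mathbf P)=\sum_i\log|\tilde u_i|+\sum_i\log\bigl|\sum_j p_{ij}(\tilde f_j-\bar{\tilde f})\bigr|$. So the magnitudes $|\tilde u_i|$ that Algorithm~1 sorts on enter only as an additive constant. The separability you invoke therefore shows that swaps within a sign class leave $\Phi_A$ unchanged on deterministic matchings, so rearrangement selects nothing. Strict concavity of $\log$ then makes lotteries that equalize $\sum_j p_{ij}\tilde f_j$ within a sign class strictly better.

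Here is a concrete instance. Take $X=2$, $I=J=3$, $M_j=1$, and $\mathbf f_1=(4,0)$, $\mathbf f_2=(3,0)$, $\mathbf f_3=(0,0)$, so $\sigma_1=5$, $\sigma_2=0$, and every $\mathbf r_j=0$. Take truthful $\mathbf u_1=(2,0)$, $\mathbf u_2=(1,0)$, $\mathbf u_3=(-1,0)$, so $o=(14/3,7/3,-7/3)$.
\begin{itemize}
\item Algorithm~1 returns the identity matching, with gains $(10/3,2/3,7/3)$ and $\mathrm{NSW}=140/27$.
\item Giving agents~1 and~2 each a $\tfrac12$--$\tfrac12$ lottery over objects~1 and~2 yields gains $(7/3,7/6,7/3)$ and $\mathrm{NSW}=343/54$, which is the maximum.
\item The ratio is $40/49$ although $\sigma_2^2/\sigma_1^2=0$.
\item Moving mass $\epsilon$ in this direction from $\mathbf P^*$ raises $\Phi_A$ at rate $9/10$, so your KKT check would fail.
\end{itemize}
Your claim that every $A_i(\mathbf P^*)>0$ also fails. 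With $\mathbf f_1=(2,0)$, $M_1=2$, $\mathbf f_2=(0,0)$, $M_2=1$ and $\mathbf u_i=(4-i,0)$, sorting gives agent~3 a below-mean object, so $\mathrm{NSW}(\mathbf P^*)=0$. The uniform lottery, by contrast, gives every agent a strictly positive gain.

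\textbf{Consequences for Steps~3 and~4, and for the statement itself.} The first bracket in your Step~3 is strictly positive and of order one even when $R\equiv0$. By continuity the gap persists for small $\sigma_2>0$. So no control of $\sum_i(R_i/A_i)^2$ can recover the bound: the statement is false as written. A correct version must change either the mechanism, since the rank-one NSW optimum generally requires lotteries, or the benchmark, since sorting does maximize the utilitarian sum $\sum_i\tilde u_i\sum_j p_{ij}\tilde f_j$ by rearrangement.

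The paper's proof does not avoid this. Its Step~5 argues only the utilitarian fact, yet \eqref{eq:nsw_step4} invokes that $\mathbf P^*$ maximizes the projected NSW. Its passage from $(1-O(\sigma_2/\sigma_1))^I$ to $1-O(\sigma_2^2/\sigma_1^2)$ is also unjustified. You were right to distrust that last step, but your envelope repair rests on the same false premise.

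Two further problems would remain even if Step~2 held:
\begin{itemize}
\item $\Phi_A$ depends on $\mathbf P$ only through $I$ linear functionals. It therefore has flat directions on $\mathcal P$ in general and cannot pin $\hat{\mathbf P}$ near $\mathbf P^*$.
\item Step~4 measures against the wrong scale. $A_i$ involves the centered spread $\tilde f_j-\bar{\tilde f}$, whereas $\sigma_1^2=\sum_j\tilde f_j^2$ is uncentered energy. In the paper's $3\times3$ example every $\tilde f_j$ lies within $1$ of the mean, while $\sum_j\|\mathbf r_j\|^2\approx49$. Moreover, because $\log\mathrm{NSW}$ sums $I$ terms, $\sum_i(R_i/A_i)^2$ generically grows with $I$ rather than staying bounded.
\end{itemize}
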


\begin{proof}
We establish this through several steps.

\textbf{Step 1: Approximation error bound.} By the Eckart-Young theorem \citep{EckartYoung1936}, the optimal rank-1 approximation to $\mathbf{F}$ is $\mathbf{F}_1 = \sigma_1 \mathbf{u}_1 \mathbf{v}_1^T$, and the approximation error in the Frobenius norm is:
\begin{equation}\label{eq:frobenius_error}
\|\mathbf{F} - \sigma_1 \mathbf{u}_1 \mathbf{v}_1^T\|_F^2 = \sum_{\ell=2}^X \sigma_\ell^2 \leq X\sigma_2^2.
\end{equation}
This error can be decomposed across objects:
\[
\sum_{j=1}^J \|\mathbf{f}_j - (\mathbf{f}_j \cdot \mathbf{v}_1)\mathbf{v}_1\|_2^2 = \sum_{\ell=2}^X \sigma_\ell^2 \leq X\sigma_2^2.
\]

\textbf{Step 2: Utility approximation error.} For agent $i$ assigned to object $j$, define the projected utility as $\tilde{U}_{ij} = (\mathbf{u}_i \cdot \mathbf{v}_1)(\mathbf{f}_j \cdot \mathbf{v}_1) = \tilde{w}_i \tilde{f}_j$. The approximation error for a single utility is:
\begin{align}
|U_{ij} - \tilde{U}_{ij}| &= |\mathbf{u}_i \cdot \mathbf{f}_j - (\mathbf{u}_i \cdot \mathbf{v}_1)(\mathbf{f}_j \cdot \mathbf{v}_1)| \nonumber\\
&= |\mathbf{u}_i \cdot (\mathbf{f}_j - (\mathbf{f}_j \cdot \mathbf{v}_1)\mathbf{v}_1)| \label{eq:util_error_1}\\
&\leq \|\mathbf{u}_i\|_2 \|\mathbf{f}_j - (\mathbf{f}_j \cdot \mathbf{v}_1)\mathbf{v}_1\|_2 \label{eq:util_error_2}
\end{align}
where inequality \eqref{eq:util_error_2} follows from the Cauchy-Schwarz inequality.

The term $\mathbf{f}_j - (\mathbf{f}_j \cdot \mathbf{v}_1)\mathbf{v}_1$ is the component of $\mathbf{f}_j$ orthogonal to $\mathbf{v}_1$. By the SVD decomposition, we can write:
\[
\mathbf{f}_j = \sum_{\ell=1}^X (\mathbf{f}_j \cdot \mathbf{v}_\ell)\mathbf{v}_\ell = (\mathbf{f}_j \cdot \mathbf{v}_1)\mathbf{v}_1 + \sum_{\ell=2}^X (\mathbf{f}_j \cdot \mathbf{v}_\ell)\mathbf{v}_\ell.
\]
Therefore:
\[
\|\mathbf{f}_j - (\mathbf{f}_j \cdot \mathbf{v}_1)\mathbf{v}_1\|_2^2 = \left\|\sum_{\ell=2}^X (\mathbf{f}_j \cdot \mathbf{v}_\ell)\mathbf{v}_\ell\right\|_2^2 = \sum_{\ell=2}^X (\mathbf{f}_j \cdot \mathbf{v}_\ell)^2
\]
by orthonormality of $\mathbf{v}_\ell$.

\textbf{Step 3: Bounding the orthogonal component.} Summing over all objects:
\[
\sum_{j=1}^J \|\mathbf{f}_j - (\mathbf{f}_j \cdot \mathbf{v}_1)\mathbf{v}_1\|_2^2 = \sum_{j=1}^J \sum_{\ell=2}^X (\mathbf{f}_j \cdot \mathbf{v}_\ell)^2 = \sum_{\ell=2}^X \sum_{j=1}^J (\mathbf{f}_j \cdot \mathbf{v}_\ell)^2 = \sum_{\ell=2}^X \sigma_\ell^2 \leq X\sigma_2^2.
\]
By the pigeonhole principle, there exists at least one object $j$ such that:
\[
\|\mathbf{f}_j - (\mathbf{f}_j \cdot \mathbf{v}_1)\mathbf{v}_1\|_2^2 \leq \frac{X\sigma_2^2}{J}.
\]
More generally, for the average object:
\[
\mathbb{E}_j[\|\mathbf{f}_j - (\mathbf{f}_j \cdot \mathbf{v}_1)\mathbf{v}_1\|_2] \leq \sqrt{\mathbb{E}_j[\|\mathbf{f}_j - (\mathbf{f}_j \cdot \mathbf{v}_1)\mathbf{v}_1\|_2^2]} = \sqrt{\frac{X\sigma_2^2}{J}} = \sigma_2\sqrt{\frac{X}{J}}.
\]

\textbf{Step 4: Expected utility error bound.} For an agent $i$ with $\|\mathbf{u}_i\|_2 = U_{\max}$ (assumed bounded), the expected utility error when matched to a random object is:
\[
\mathbb{E}_j[|U_{ij} - \tilde{U}_{ij}|] \leq U_{\max} \cdot \sigma_2\sqrt{\frac{X}{J}} = O\left(\frac{\sigma_2}{\sqrt{J}}\right).
\]
When $\sigma_1 \gg \sigma_2$, this error is negligible relative to the scale of utilities, which is $O(\sigma_1)$.

\textbf{Step 5: NSW approximation.} Algorithm 1 matches agents with high $\tilde{w}_i = \mathbf{w}_i \cdot \mathbf{v}_1$ to objects with high $\tilde{f}_j = \mathbf{f}_j \cdot \mathbf{v}_1$. In the projected one-dimensional space, this allocation $\mathbf{P}^*$ maximizes the sum of projected utilities:
\[
\sum_{i=1}^I \mathbb{E}[\tilde{U}_i|\mathbf{P}^*] = \max_{\mathbf{P} \in \mathcal{P}} \sum_{i=1}^I \mathbb{E}[\tilde{U}_i|\mathbf{P}].
\]

For each agent $i$, the true expected utility under $\mathbf{P}^*$ satisfies the following:
\begin{align}
\mathbb{E}[U_i|\mathbf{P}^*] &= \sum_{j=1}^J p_{ij}^* U_{ij} \nonumber\\
&= \sum_{j=1}^J p_{ij}^* \tilde{U}_{ij} + \sum_{j=1}^J p_{ij}^* (U_{ij} - \tilde{U}_{ij}) \label{eq:util_decomp}\\
&\geq \mathbb{E}[\tilde{U}_i|\mathbf{P}^*] - \max_{j: p_{ij}^* > 0} |U_{ij} - \tilde{U}_{ij}| \label{eq:util_lower}\\
&\geq \mathbb{E}[\tilde{U}_i|\mathbf{P}^*] - O\left(U_{\max} \cdot \frac{\sigma_2}{\sigma_1}\right). \label{eq:util_final}
\end{align}

\textbf{Step 6: NSW lower bound.} Nash Social Welfare can be bounded as:
\begin{align}
\text{NSW}(\mathbf{P}^*|\mathbf{U}) &= \prod_{i=1}^I (\mathbb{E}[U_i|\mathbf{P}^*] - o_i) \nonumber\\
&\geq \prod_{i=1}^I \left(\mathbb{E}[\tilde{U}_i|\mathbf{P}^*] - O\left(\frac{\sigma_2}{\sigma_1}\right) - o_i\right) \label{eq:nsw_step1}\\
&\geq \prod_{i=1}^I \left(\mathbb{E}[\tilde{U}_i|\mathbf{P}^*] - o_i\right) \left(1 - O\left(\frac{\sigma_2}{\sigma_1}\right)\right) \label{eq:nsw_step2}\\
&= \text{NSW}(\mathbf{P}^*|\tilde{\mathbf{U}}) \cdot \left(1 - O\left(\frac{\sigma_2}{\sigma_1}\right)\right)^I \label{eq:nsw_step3}\\
&\geq \max_{\mathbf{P} \in \mathcal{P}} \text{NSW}(\mathbf{P}|\tilde{\mathbf{U}}) \cdot \left(1 - O\left(\frac{I\sigma_2}{\sigma_1}\right)\right) \label{eq:nsw_step4}\\
&\geq \max_{\mathbf{P} \in \mathcal{P}} \text{NSW}(\mathbf{P}|\mathbf{U}) \cdot \left(1 - O\left(\frac{\sigma_2^2}{\sigma_1^2}\right)\right) \label{eq:nsw_final}
\end{align}
where step \eqref{eq:nsw_step4} uses the fact that $\mathbf{P}^*$ maximizes NSW in the projected space, and step \eqref{eq:nsw_final} follows from applying the approximation error bound in both directions and using $(1-\epsilon)^I \approx 1 - I\epsilon$ for small $\epsilon = O(\sigma_2/\sigma_1)$.

Therefore, when $\sigma_1 \gg \sigma_2$, we have:
\[
\text{NSW}(\mathbf{P}^*|\mathbf{U}) \geq \left(1 - O\left(\frac{\sigma_2^2}{\sigma_1^2}\right)\right) \max_{\mathbf{P} \in \mathcal{P}} \text{NSW}(\mathbf{P}|\mathbf{U}).
\]
\end{proof}

\begin{theorem}[Distributional Truthfulness]\label{thm:truthfulness}
Under the misreporting model $w_{ix} = u_{ix} + \epsilon_{ix}$ with $\epsilon_{ix} \sim \mathcal{N}(\mu_i, \sigma_i^2)$, Algorithm 1 is $(\lambda, \delta)$-distributionally truthful for $\lambda = \sqrt{\frac{1}{2X}\log(\frac{2}{\delta})}$.
\end{theorem}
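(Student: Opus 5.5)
The plan is to reduce the statement to the Dvoretzky--Kiefer--Wolfowitz--Massart (DKWM) inequality, exactly as the definition of $(\lambda,\delta)$-distributional truthfulness in Section~\ref{sec:preliminaries} suggests, and then to invert the DKWM tail bound to solve for $\lambda$ in terms of $\delta$. The key observation is that the definition concerns only the relation between the empirical CDF $F_{\mathbf{w}_i}$ of the $X$ reported coordinates and the empirical CDF $F_{\mathbf{u}_i}$ of the true coordinates. It does not involve the allocation produced by Algorithm~\ref{alg:svd} beyond the fact that the algorithm consumes $\mathbf{w}_i$ as given. So the argument is essentially per-agent and independent of the SVD step.

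\textbf{Step 1 (truthful reporting as a same-distribution sample).} Under truthful reporting, interpreted within the misreporting model as $\mu_i = 0$ with the noise level $\sigma_i$ small relative to the spread of the coordinates, I would treat the coordinates $w_{i1},\dots,w_{iX}$ as $X$ i.i.d.\ draws from the distribution $G_i$ that generates the true preference coordinates $u_{ix}$. This is the hypothesis ``$\mathbf{w}_i$ is drawn from the same distribution as $\mathbf{u}_i$'' invoked in the definition. Concretely, I would regard both $\mathbf{u}_i$ and $\mathbf{w}_i$ as samples whose common population CDF is $G_i$ (after Gaussian convolution, which is the identity in the limit $\sigma_i\to 0$).

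\textbf{Step 2 (DKWM bound).} Apply DKWM to each empirical CDF against $G_i$:
\[
\mathbb{P}\Bigl(\sup_t |F_{\mathbf{w}_i}(t)-G_i(t)| > \lambda\Bigr) \le 2e^{-2\lambda^2 X}.
\]
Then bound $D_i$ by the triangle inequality through $G_i$, or directly compare $\mathbf{w}_i$ to $G_i$ when $\mathbf{u}_i$ is regarded as the population. This yields the required tail bound $\mathbb{P}(D_i>\lambda)\le 2e^{-2\lambda^2X}$, possibly with constants absorbed.

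\textbf{Step 3 (inversion).} Set $2e^{-2\lambda^2X}=\delta$ and solve, giving $\lambda=\sqrt{\frac{1}{2X}\log(2/\delta)}$. With this choice the bound holds with probability at least $1-\delta$, which is the claimed $(\lambda,\delta)$ pair. Finally, note that Algorithm~\ref{alg:svd} treats every agent's report symmetrically and does not transform $\mathbf{w}_i$ before projecting. The distributional guarantee therefore transfers to the mechanism's inputs, which is all the definition requires.

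The main obstacle is Step~2's comparison of two empirical CDFs, rather than an empirical CDF against a population CDF. The honest triangle-inequality route gives $\mathbb{P}(D_i>2\lambda)\le 4e^{-2\lambda^2X}$, which changes the constants. The alternative is the two-sample Kolmogorov--Smirnov tail, which has effective sample size $X/2$. I would try first to interpret $F_{\mathbf{u}_i}$ as the population CDF, matching the paper's phrasing, so that one-sample DKWM applies verbatim. I would fall back to stating the result with adjusted constants if that interpretation is not available. A secondary concern is that Gaussian noise with $\sigma_i>0$ makes the two distributions differ slightly. I would control this by a term of order $\sigma_i\sup_t g_i(t)$, where $g_i$ is the density of $G_i$, which vanishes under exact truthfulness.
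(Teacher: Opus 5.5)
Your skeleton (apply DKWM, then solve $2e^{-2\lambda^2X}=\delta$ for $\lambda$) is the same as the paper's. However, your route through it does not deliver the stated constant, and the obstacle you flag in Step~2 is a real gap, not bookkeeping.

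DKWM compares one empirical CDF with its population CDF. $D_i$ instead compares two empirical CDFs, each built from $X$ points. Your fallbacks do not close this:
\begin{itemize}
\item Going through $G_i$ by the triangle inequality gives only $\mathbb{P}(D_i>2\lambda)\le 4e^{-2\lambda^2X}$. That is, $D_i\le\sqrt{\tfrac{2}{X}\log(4/\delta)}$ with probability $1-\delta$, which is not the stated $\lambda$.
\item The two-sample KS tail is also weaker. It additionally requires the two samples to be independent, which the coupling $w_{ix}=u_{ix}+\epsilon_{ix}$ rules out.
\item Reading $F_{\mathbf{u}_i}$ as a population CDF is not available. Section~\ref{sec:preliminaries} defines it as the empirical CDF of the $X$ true coordinates.
\end{itemize}

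Step~1 has a similar problem. Nothing in the statement makes $u_{i1},\dots,u_{iX}$ i.i.d.\ draws from some $G_i$. For $\sigma_i>0$, each $w_{ix}$ has law $G_i$ convolved with $\mathcal{N}(\mu_i,\sigma_i^2)$, not $G_i$. Your correction of order $\sigma_i\sup_t g_i(t)$ shifts the threshold and therefore changes the bound.

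In fact, no argument that genuinely uses the noisy model can give the stated bound. Take $|\mu_i|$ large relative to $\sigma_i$ and the spread of the $u_{ix}$. The reported coordinates are then all shifted past the true ones, and $D_i=1$ with high probability. Even with $\mu_i=0$, suppose the $u_{ix}$ are tightly clustered relative to $\sigma_i$. Then $D_i$ stays near $1/2$ for large $X$, while $2e^{-2\lambda^2X}\to 0$ for any fixed $\lambda$.

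The missing observation is that the definition requires the tail bound only \emph{when agents report truthfully}. The paper's proof takes this literally as $\mathbf{w}_i=\mathbf{u}_i$, the degenerate case $\epsilon_{ix}\equiv 0$, not ``$\mu_i=0$ with small $\sigma_i$''. Then $F_{\mathbf{w}_i}=F_{\mathbf{u}_i}$ identically, so $D_i=0$. Hence $\mathbb{P}(D_i>\lambda)=0\le 2e^{-2\lambda^2X}=\delta$ for the stated $\lambda$, with probability one. This needs no i.i.d.\ modelling and loses no constants.

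This trivial fact, not DKWM, is what actually carries the paper's argument. The appeal to DKWM there is inert, since the bound holds for every $\lambda$ and $\delta$. So the theorem is true but says nothing about the consequences of misreporting.
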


\begin{proof}
The mechanism computes $\tilde{w}_i = \mathbf{w}_i \cdot \mathbf{v}_1$ and allocates based on sorted order. If agent $i$ reports truthfully ($\mathbf{w}_i = \mathbf{u}_i$), then by the DKWM inequality \citep{DvoretzkyEtAl1956,Massart1990}, $\mathbb{P}(D_i > \lambda) \leq 2e^{-2\lambda^2 X}$ where $D_i = \sup_t |F_{\mathbf{w}_i}(t) - F_{\mathbf{u}_i}(t)|$ is the KS distance between empirical CDFs. Setting this bound equal to $\delta$ yields the stated $\lambda$.
\end{proof}

\begin{theorem}[Symmetry]\label{thm:symmetry}
Algorithm 1 satisfies symmetry: agents with identical reported preferences receive identical allocations.
\end{theorem}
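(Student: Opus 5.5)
The plan is to reduce symmetry to a statement about ties in the sorted order of Algorithm~\ref{alg:svd}. The mechanism sees agent $i$ only through the scalar $\tilde{w}_i = \mathbf{w}_i \cdot \mathbf{v}_1$. The direction $\mathbf{v}_1$ depends only on $\mathbf{F}$, not on the reports. So if $\mathbf{w}_i = \mathbf{w}_{i'}$, then $\tilde{w}_i = \tilde{w}_{i'}$, and the two agents are indistinguishable to every later step. The first step records this: the allocation is a function of the multiset of projected scores together with the tie-breaking rule used in line~5 of the algorithm.

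The second step, and the main obstacle, is that ``sort agents'' with an arbitrary deterministic tie-break can put $i$ and $i'$ at different positions in $\tau$. Line~6 would then give them different objects, so $\mu(\mathbf{W})_{ij} \neq \mu(\mathbf{W})_{i'j}$ for some $j$. Taken literally, the statement therefore requires the matching step to be read as producing a \emph{probabilistic} allocation $\mathbf{P} \in \mathcal{P}$, which the model allows, with ties resolved symmetrically. I would make this explicit.

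Partition agents into blocks $B_1, B_2, \ldots$ of equal projected score, ordered by decreasing score. Let $S_k$ be the multiset of $|B_k|$ seats that the capacity-respecting sorted fill assigns to positions $\sum_{m<k}|B_m|+1, \ldots, \sum_{m\le k}|B_m|$. This multiset is independent of how ties inside $B_k$ are ordered. Each agent $i \in B_k$ then receives
\[
p_{ij} = \frac{\#\{\text{seats of object } j \text{ in } S_k\}}{|B_k|}.
\]
Equivalently, $\mathbf{P}$ is the average of the deterministic sorted assignment over all $\prod_k |B_k|!$ tie-breaking orders.

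The third step checks that this $\mathbf{P}$ is feasible and symmetric. Row sums equal $1$ because $|S_k| = |B_k|$. Column sums equal $M_j$ because the blocks partition the positions $1,\ldots,I$, and the sorted fill uses exactly $M_j$ seats of each object $j$ overall, with $\sum_j M_j = I$. Symmetry follows immediately: $\mathbf{w}_i = \mathbf{w}_{i'}$ places $i$ and $i'$ in the same block $B_k$, and the formula for $p_{ij}$ depends on $i$ only through $k$.

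Finally, I would remark that the averaging does not disturb the conclusion of Theorem~\ref{thm:efficiency}. Within a block all agents have the same $\tilde{w}_i$, so the sum of projected utilities is the same under every tie-breaking order. Hence the averaged allocation still attains the maximum used in Step~5 of that proof.
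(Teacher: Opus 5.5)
Your proposal is correct, and it takes a different and more careful route than the paper. The paper's proof is only your first step: $\mathbf{v}_1$ depends on $\mathbf{F}$ alone, so $\mathbf{w}_i=\mathbf{w}_{i'}$ gives $\tilde{w}_i=\tilde{w}_{i'}$. It then asserts that $i$ and $i'$ ``have identical positions in sorted order'' and so receive identical allocations.

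That inference is exactly the point you flag as the main obstacle. A sort is a permutation $\tau$, so tied agents occupy distinct positions, and line~6 can give them different objects. Concretely, take $I=J=2$, unit capacities, and $\mathbf{w}_1=\mathbf{w}_2$. Every feasible $\{0,1\}$-valued allocation is then a permutation matrix, and its two rows differ. Since the paper itself states (Section~\ref{sec:limits}) that its allocation has $p_{ij}\in\{0,1\}$, the theorem as literally stated holds only when tied agents happen to land inside a single object's capacity block. The paper's argument therefore establishes only equal treatment up to the tie-breaking rule.

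Your construction repairs this and proves the statement. You partition agents into tie blocks and give each member of $B_k$ the uniform mixture over the seat multiset $S_k$, which is equivalent to averaging over within-block orders. Your feasibility checks are right: row sums are $1$ because $|S_k|=|B_k|$, and column sums are $M_j$ because the blocks partition the positions. You also get something slightly stronger: equal treatment of agents with equal projected scores, not just equal reports. Your remark that the projected welfare sum is constant within each block is also right, so Step~5 of Theorem~\ref{thm:efficiency} is unaffected.

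The cost is a change to the mechanism. Algorithm~1 must be read as randomizing uniformly over ties, so its output is a genuinely fractional element of $\mathcal{P}$ rather than the deterministic assignment the paper describes. The example above shows this randomization (or an ex-ante reading of ``identical allocations'') is necessary, not just convenient.
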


\begin{proof}
The mechanism computes $\phi(\mathbf{w}_i) = \mathbf{w}_i \cdot \mathbf{v}_1$ and assigns objects deterministically based on sorted order. If $\mathbf{w}_i = \mathbf{w}_{i'}$, then $\phi(\mathbf{w}_i) = \phi(\mathbf{w}_{i'})$, so agents $i$ and $i'$ have identical positions in sorted order and receive identical allocations.
\end{proof}

\begin{theorem}[Connection to Robust Optimization]\label{thm:gdro}
Nash Social Welfare corresponds to Geometric Distributionally Robust Optimization (GDRO). Maximizing NSW is equivalent to minimizing the geometric mean of losses $\ell_i(\mathbf{P}) = -(\mathbb{E}[U_i|\mathbf{P}] - o_i)$ across potentially non-convex uncertainty sets, providing robustness to distributional uncertainty about agent utilities.
\end{theorem}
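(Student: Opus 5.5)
We read the statement as an equivalence of optimization problems, and we make it precise as follows. We work with the \emph{gains} $g_i(\mathbf{P}) = \mathbb{E}[U_i|\mathbf{P}] - o_i$ rather than the losses $\ell_i = -g_i$ directly. A geometric mean of the negative numbers $\ell_i$ is not well defined, so "minimizing the geometric mean of losses" has to be interpreted through the gains. On the region where NSW is nonzero we have $g_i > 0$ for all $i$, and there
\[
\text{NSW}(\mathbf{P}|\mathbf{U}) = \prod_i g_i(\mathbf{P}) = \bigl(\mathrm{GM}(g(\mathbf{P}))\bigr)^I,
\qquad \mathrm{GM}(g) = \Bigl(\prod_{i=1}^I g_i\Bigr)^{1/I}.
\]
\textbf{Step 1.} Since $t \mapsto t^{1/I}$ and $\log$ are strictly increasing on $(0,\infty)$, the three problems $\arg\max_{\mathbf{P}\in\mathcal{P}} \text{NSW}$, $\arg\max_{\mathbf{P}} \mathrm{GM}(g(\mathbf{P}))$ and $\arg\min_{\mathbf{P}} -\frac{1}{I}\sum_i \log g_i(\mathbf{P})$ all have the same solution set. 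Allocations with some $g_i \le 0$ have NSW $=0$. They are never optimal whenever some feasible $\mathbf{P}$ gives every agent a strictly positive gain, and we will assume such a $\mathbf{P}$ exists. The geometric-mean objective is the quantity that GDRO \citep{LiuEtAl2022_NEURIPS2022_da535999} targets.

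\textbf{Step 2 (robust representation).} Next we write $\mathrm{GM}$ as a worst case over reweightings of agents. The weighted AM--GM inequality gives, for $g \in \mathbb{R}^I_{>0}$,
\[
\mathrm{GM}(g) = \min_{\lambda \in \Lambda} \frac{1}{I}\sum_{i=1}^I \lambda_i g_i,
\qquad \Lambda = \Bigl\{\lambda \in \mathbb{R}^I_{>0} : \prod_i \lambda_i = 1\Bigr\}.
\]
The inequality "$\ge$" is AM--GM applied to the numbers $\lambda_i g_i$, whose product is $\prod_i g_i$. The minimum is attained at $\lambda_i = \mathrm{GM}(g)/g_i$. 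It follows that
\[
\max_{\mathbf{P}}\text{NSW} \;\Longleftrightarrow\; \max_{\mathbf{P}\in\mathcal{P}} \min_{\lambda\in\Lambda} \frac{1}{I}\sum_i \lambda_i g_i(\mathbf{P}),
\]
and in loss form this is $\min_{\mathbf{P}} \max_{\lambda\in\Lambda} \frac{1}{I}\sum_i \lambda_i \ell_i(\mathbf{P})$. This is a distributionally robust problem: an adversary reweights the agents, that is, reweights the empirical distribution of utilities. The uncertainty set $\Lambda$ is the hypersurface $\prod_i \lambda_i = 1$, which is non-convex, and this is the sense in which the statement speaks of "non-convex uncertainty sets".

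\textbf{Step 3 (link to the log form).} To connect with the form used by \citet{LiuEtAl2022_NEURIPS2022_da535999}, we would also note the following. The objective $-\frac1I\sum_i \log g_i$ is the negative Rényi/power mean of order $0$, which is the limit of the KL-ball DRO objectives $\max_{q:\,\mathrm{KL}(q\|\text{unif})\le\rho}\mathbb{E}_q[\cdot]$ in log space. Citing their geometric-mean result for this identification should suffice. Robustness is then read off from Step 2. For any reweighting $\lambda \in \Lambda$, the NSW-optimal $\mathbf{P}$ guarantees $\frac1I\sum_i\lambda_i g_i \ge \mathrm{GM}(g)$. Hence no agent-level distributional shift within $\Lambda$ can push the average gain below the optimal geometric mean.

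The main obstacle is interpretive rather than technical. We must fix a convention under which the "geometric mean of losses" is meaningful: either pass to gains as above, or work with $\log(-\ell_i)$. We must also handle the boundary case in which some $g_i = 0$. For that case we would restrict to the interior of $\mathcal{P}$ under the positivity assumption above, and otherwise treat both sides as degenerate (both equal to $0$).
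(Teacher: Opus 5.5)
Your Step~1 is essentially the paper's argument, done correctly. The paper also proves the equivalence by a monotone (logarithmic) transformation, then cites \citet{LiuEtAl2022_NEURIPS2022_da535999} for robustness and non-convexity. That route is shorter and leans entirely on the literature. However, it works with $\ell_i$ directly, writing $\min_{\mathbf{P}}\prod_i\ell_i(\mathbf{P})^{-1}$ and $\min_{\mathbf{P}}\sum_i\log\ell_i(\mathbf{P})$. Since $\ell_i<0$ wherever NSW is positive, three things go wrong:
\begin{itemize}
\item the logarithm is undefined;
\item the product of inverses equals $(-1)^I\prod_i g_i^{-1}$, so for odd $I$ minimizing it minimizes NSW;
\item even $\sum_i\log|\ell_i|=\sum_i\log g_i$ would be minimized in the wrong direction.
\end{itemize}
The correct loss form is $\min_{\mathbf{P}}\sum_i-\log(-\ell_i(\mathbf{P}))$, which is exactly what your passage to gains gives.

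Your positivity assumption is also a real addition rather than a formality. It fails, for example, for identical agents with equal capacities: there $\sum_i g_i(\mathbf{P})=0$ for every feasible $\mathbf{P}$, so every allocation has NSW $=0$.

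Where you genuinely depart from the paper is Step~2. Instead of resting ``robustness'' on a citation, you derive an explicit min--max representation from AM--GM. This is self-contained and gives that part of the statement actual content.

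\textbf{Caveats on Step~2.} The non-convexity of $\Lambda$ is an artifact of the parametrization. For $g>0$ the objective is increasing in each $\lambda_i$, so the same minimum is attained over the convex set $\{\lambda>0:\prod_i\lambda_i\ge 1\}$. Also, elements of $\Lambda$ are not probability vectors, since $\frac1I\sum_i\lambda_i\ge 1$, so calling them distributional shifts is loose. Over genuine distributions on agents, the worst-case average gain is $\min_i g_i$, which lies below $\mathrm{GM}(g)$ unless all gains coincide. If you want a bona fide DRO statement, normalize. Jensen's inequality gives
\[
\log\mathrm{GM}(g)=\min_{q}\Bigl\{\log\sum_{i=1}^I q_i g_i+\mathrm{KL}(\mathrm{unif}\,\|\,q)\Bigr\},
\]
with the minimum over the open simplex attained at $q_i\propto 1/g_i$. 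This is a reverse-KL-penalized worst case over reweightings of agents.

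\textbf{Caveat on Step~3.} The KL-ball ``limit'' you invoke is the $\rho\to 0$ endpoint. There the DRO collapses to the plain average of $-\log g_i$ and carries no robustness. So the identification with the cited GDRO rests, exactly as in the paper, on the citation alone. Check that the cited GDRO actually contains such a result before relying on it. To my knowledge it is a Wasserstein-type DRO adapted to data geometry, where ``geometric'' does not refer to geometric means.
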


\begin{proof}
Let $\ell_i(\mathbf{P}) = -(\mathbb{E}[U_i|\mathbf{P}] - o_i)$. Then $\max_{\mathbf{P}} \text{NSW}(\mathbf{P}|\mathbf{U}) = \max_{\mathbf{P}} \prod_i (\mathbb{E}[U_i|\mathbf{P}] - o_i) \Leftrightarrow \min_{\mathbf{P}} \prod_i \ell_i(\mathbf{P})^{-1}$. Taking logarithms: $\max_{\mathbf{P}} \sum_i \log(\mathbb{E}[U_i|\mathbf{P}] - o_i) \Leftrightarrow \min_{\mathbf{P}} \sum_i \log \ell_i(\mathbf{P})$. This is precisely the geometric mean of distances from the disagreement point. \cite{LiuEtAl2022_NEURIPS2022_da535999} show this provides distributionally robust estimates when underlying utility distributions lie in non-convex sets, unlike arithmetic means or Wasserstein distances.
\end{proof}

\section{Computational Complexity}\label{sec:performance}

Algorithm 1 runs in $O(\min(J^2 X, JX^2) + I \log I + J \log J)$ time, dominated by SVD computation ($O(\min(J^2 X, JX^2))$ using standard algorithms \citep{GolubReinsch1970}) and sorting ($O(I \log I + J \log J)$). For typical markets---school choice with $I = 10^3$--$10^4$ students, $J = 10^2$ schools, $X = 5$--10 features; job matching with $I = 10^4$, $J = 10^3$, $X = 10$; course allocation with $I = 10^3$, $J = 10^2$, $X = 5$--computation requires milliseconds on modern hardware.

In contrast, direct NSW optimization solves $\max_{\mathbf{P} \in \mathcal{P}} \prod_i (\sum_j p_{ij} U_{ij} - o_i)$, a non-convex problem with $IJ$ variables and $I+J$ constraints. The interior point methods require $O((IJ)^3)$ operations per iteration without convergence guaranties. For $I = J = 100$, this is $O(10^9)$ operations per iteration versus $O(10^5)$ for our approach, which is four orders of magnitude slower. The space complexity is $O(IX + JX + IJ)$ for storing preference matrices, feature matrices, and the allocation of the output.

\section{When Does Low Effective Dimensionality Hold Empirically?}
\label{sec:empirical_dimensionality}
 
The theoretical guarantees of Algorithm~\ref{alg:svd} rest on
condition~$(*)$: that the explained variance ratio
$\rho_1 = \sigma_1^2/\sum_\ell \sigma_\ell^2$ satisfies $\rho_1 \gg 1/X$, so that feature variation concentrates in a single direction.
Whether this condition holds in practice is, as the reviewers note, the central empirical question for assessing the reach of our framework. We argue in this section that low effective dimensionality is not a pathological special case but rather a pervasive structural feature of the three application domains we target: school choice, labor markets, and course allocation. We also propose a pre-deployment diagnostic that practitioners can use
to assess whether the condition holds before committing to Algorithm~\ref{alg:svd}.
 
\subsection{Formal Characterization}
\label{sec:eff_dim_formal}
 
Given a feature matrix $F \in \mathbb{R}^{J \times X}$ with singular values $\sigma_1 \geq \sigma_2 \geq \cdots \geq \sigma_X \geq 0$, two
complementary scalars summarize its effective dimensionality.
 
The \emph{effective rank} \citep{roy1960effective, vershynin2018high} is
\begin{equation}
    r_{\mathrm{eff}}(F)
        \;=\;
        \frac{\bigl(\sum_{\ell=1}^{X} \sigma_\ell\bigr)^2}
             {\sum_{\ell=1}^{X} \sigma_\ell^2}
        \;\in\; [1,\, X].
    \label{eq:eff_rank}
\end{equation}
When $r_{\mathrm{eff}} \approx 1$, essentially all variance is captured
by $v_1$ and Theorem~\ref{thm:efficiency} guarantees a
$(1-O(\sigma_2^2/\sigma_1^2))$ welfare approximation.
 
The \emph{explained variance ratio}
\begin{equation}
    \rho_1 \;=\; \frac{\sigma_1^2}{\sum_{\ell=1}^{X}\sigma_\ell^2}
    \label{eq:rho1}
\end{equation}
is the same quantity introduced in condition~$(*)$.
We take $\rho_1 \geq 0.5$ as a natural operationalization of
$\sigma_1 \gg \sigma_2$: the first component explains at least half of
total feature variance, equivalently $r_{\mathrm{eff}} \lesssim 2$.
Our medium-scale synthetic experiment (Section~\ref{sec:experiments})
achieves $\rho_1 = 0.488$, already near this threshold.
The pedagogical example achieves $\rho_1 = 0.880$.
We now review empirical evidence suggesting that real-world matching
markets routinely satisfy $\rho_1 \geq 0.5$.
 
\subsection{School Choice}
\label{sec:dim_school}
 
The school choice literature consistently finds that parental preferences
are dominated by a small number of school attributes, with academic quality
acting as a near-universal first factor.
 
Empirical revealed-preference studies reviewed by \citet{agarwal2020revealed}
recover preference parameters from administrative assignment data in Boston,
New York City, and numerous international markets.
A recurrent finding is that test scores and academic performance explain
the largest share of school demand, with distance entering as a secondary
factor, and other attributes---extracurricular offerings, school
demographics, peer composition---contributing substantially less.
This pattern is directly consistent with a feature matrix having high
$\rho_1$: schools differ primarily along a quality--proximity composite,
with remaining dimensions contributing modest additional variance.
 
\citet{hastings2006preferences} structurally estimate school preferences
in Charlotte-Mecklenburg and find that willingness-to-pay for academic
quality is the dominant driver of choice.
\citet{burgess2015school} reach the same conclusion using English data.
\citet{abdulkadiroglu2020parents} further show that most families respond
strongly to causal estimates of school effectiveness---a one-dimensional
quality index---when such information is made salient, confirming that
a rank-1 approximation to the school feature matrix captures the
majority of preference-relevant variation.
 
On the supply side, schools' measurable attributes---value-added scores,
student-teacher ratios, graduation rates, and peer test-score
averages---are highly correlated in practice
\citep{chetty2014measuring}.
High inter-attribute correlations imply that the covariance structure
of $F$ is dominated by a single principal component.
 
\paragraph{When does the condition fail?}
School choice becomes genuinely multi-dimensional when families face
strong trade-offs between incommensurable attributes: superior academic
quality against a long commute \citep{burgess2015school}, or when racial
composition, religious affiliation, or language of instruction are salient
independent dimensions.
\citet{calsamiglia2014illusion} document near-orthogonal preference
gradients for proximity and quality in Barcelona---a configuration that
raises $\sigma_2$ relative to $\sigma_1$.
Practitioners should compute $\rho_1$ before deployment
(Section~\ref{sec:diagnostic}).
 
\subsection{Labor Markets}
\label{sec:dim_labor}
 
The multi-dimensional structure of worker--job matching has received
extensive theoretical attention
\citep{lindenlaub2017sorting, lise2020multidimensional,
lindenlaub2023multidimensional}, but the same literature reveals a
crucial empirical regularity: effective dimensionality is low.
 
\citet{lindenlaub2017sorting} estimates a two-dimensional matching model
for the United States using NLSY data linked to O*NET occupational
descriptors, finding that cognitive and manual skills represent the two
dominant axes of worker--job heterogeneity.
Technological change during the 1990s increased worker--job
complementarities in cognitive inputs by roughly 15\% while reducing
complementarities in manual inputs by 41\%, implying that the effective
dimensionality of the labor market was trending toward a single dominant
axis over this period.
 
\citet{maestas2023working} estimate workers' willingness-to-pay for a
broad set of job characteristics using stated-preference experiments in
the United States.
While they document non-negligible valuations across many amenity
dimensions---schedule flexibility, autonomy, physical demands, social
contribution, and earnings---the compensating differential structure
implies that most workers trade off amenities against a common
wage--quality composite, consistent with a low-rank preference matrix.
Earnings and schedule flexibility account for the dominant share of
stated willingness-to-pay across demographic groups, suggesting that two
or fewer principal components capture most preference variance relevant
for job choice.
\citet{wiswall2018preference} reach a complementary conclusion in a
stated-preference experiment with NYU undergraduates: earnings and
schedule flexibility dominate, with earnings growth, job stability,
and part-time options entering with substantially smaller weights.
 
On the object side, \citet{song2019firming} show that positive assortative matching between workers and firms operates primarily along a single skill--productivity dimension---a finding that implies the
firm feature matrix is well-approximated by a rank-1 structure.
 
\paragraph{When does the condition fail?}
Labor markets where workers sort simultaneously on incommensurable skill
dimensions---cognitive versus interpersonal skills, or where demographic
groups face systematically different amenity--wage trade-offs
\citep{maestas2023working}---will have higher effective rank.
\citet{lindenlaub2023multidimensional} develop model-selection tests for
dimensionality and find evidence of genuine two-dimensionality in US
labor market data, though their tests quantify departures from
one-dimensionality rather than welfare losses for specific mechanisms.
 
\subsection{Course Allocation}
\label{sec:dim_courses}
 
University course allocation is the domain most naturally suited to low
effective dimensionality.
Students select courses primarily to satisfy degree requirements and
career objectives, both of which compress preferences toward a small
number of disciplinary axes.
Within a business school, students who weight quantitative rigor highly
will tend to prefer finance, accounting, and operations courses
simultaneously, while students who weight communication and leadership
highly will cluster toward organizational behavior and strategy.
These disciplinary groupings generate high correlations across course
preference scores, suppressing $\sigma_2$ relative to $\sigma_1$.
 
The Wharton course allocation platform studied by \citet{budish2011combinatorial}
processes roughly 1,800 students across approximately 300 course
sections using a ``course bidding'' mechanism.
Bidding data reveal strong clustering: students within a given program
concentration exhibit highly correlated bidding patterns, and
cross-concentration bidding is sparse---consistent with a block-diagonal
preference covariance structure whose first principal component loads
heavily on a ``general academic quality'' factor.
 
More generally, whenever course attributes (rigor, topic area, instructor
reputation, workload) are themselves correlated within a department, the feature matrix $F$ will have high $\rho_1$. An engineering school where all courses share high rigor and quantitative demand will have a nearly rank-1 feature matrix by construction.
 
This observation connects directly to \citet{gans2026informative}, who
show that grades from a course allocation are uninformative for cross-course comparisons of student ability when courses differ in
difficulty and students self-select.
Their eigengrade construction applies SVD to the same student--course outcome matrix that our mechanism targets, and identifies student ability
as the leading latent dimension under precisely the low-rank condition we study here. The alignment between the two results is not coincidental: low effective dimensionality in the course feature matrix (our condition) is the structural property that makes both our allocation mechanism and their
evaluation procedure well-behaved.
 
\subsection{A Practical Diagnostic}
\label{sec:diagnostic}
 
We propose the following pre-deployment protocol.
Given a feature matrix $F \in \mathbb{R}^{J \times X}$:
 
\begin{enumerate}
    \item Compute the SVD $F = U\Sigma V^\top$ and record the explained
          variance ratios $\rho_k = \sigma_k^2 / \sum_\ell \sigma_\ell^2$
          for $k = 1, \ldots, X$.
          Note that this SVD is already required by Step~1 of
          Algorithm~\ref{alg:svd}, so the diagnostic adds no
          computational overhead.
 
    \item Compute the effective rank $r_{\mathrm{eff}}(F)$ from
          Equation~\eqref{eq:eff_rank}.
 
    \item \textbf{If $\rho_1 \geq 0.5$} (equivalently,
          $r_{\mathrm{eff}} \lesssim 2$): the welfare approximation
          guarantee of Theorem~\ref{thm:efficiency} is operative.
          Proceed with Algorithm~\ref{alg:svd}.
 
    \item \textbf{If $0.3 \leq \rho_1 < 0.5$}: the approximation
          guarantee weakens but the mechanism may still perform well in
          practice.
          Run Algorithm~\ref{alg:svd} alongside a two-dimensional variant
          (projecting onto the two leading singular vectors and solving
          the resulting bipartite matching) and report both outputs to
          the market designer.
 
    \item \textbf{If $\rho_1 < 0.3$}: features are not well concentrated
          and the rank-1 approximation is likely to produce substantial welfare losses.
          Use the Hylland--Zeckhauser pseudo-market
          \citep{HyllandZeckhauser1979} or, for small markets, direct Nash Social Welfare maximization via nonlinear solvers.
\end{enumerate}
 
Table~\ref{tab:diagnostic_thresholds} summarizes these thresholds
alongside their implications for the welfare approximation ratio
from Theorem~\ref{thm:efficiency}.
 
\begin{table}[ht]
\centering
\caption{Diagnostic thresholds for deploying Algorithm~\ref{alg:svd}.
$\rho_1 = \sigma_1^2/\sum_\ell \sigma_\ell^2$;
approximation ratio $= 1 - O(\sigma_2^2/\sigma_1^2)
= 1 - O((1-\rho_1)/\rho_1)$.}
\label{tab:diagnostic_thresholds}
\renewcommand{\arraystretch}{1.3}
\begin{tabular}{cccp{5.8cm}}
\toprule
$\rho_1$ range & $r_{\mathrm{eff}}$ & Approx.\ ratio & Recommendation \\
\midrule
$\geq 0.5$ & $\lesssim 2$ & $\geq 50\%$ guarantee; typically $>95\%$ in experiments & Proceed with Algorithm~\ref{alg:svd} \\
$0.3$--$0.5$ & $2$--$3$ & $40$--$50\%$ worst-case; check empirically & Run SVD and 2D variant; compare \\
$< 0.3$ & $> 3$ & Below $40\%$; not reliable & Use HZ pseudo-market or direct NSW solver \\
\bottomrule
\end{tabular}
\end{table}
 
\subsection{Summary}
 
The evidence reviewed above suggests that $\rho_1 \geq 0.5$ is the modal case across our three target domains rather than a restrictive special case. 
The structural reasons are straightforward: objects in real matching markets differ along a small number of salient quality dimensions that
are highly correlated with one another (test scores and teacher quality in schools; wages and amenities in labor markets; rigor and reputation in courses), and individual preferences are shaped by a small number of underlying objectives that generate correlated weighting across features. Together, these forces imply that the first singular vector of $F$ captures the dominant direction of preference-relevant variation, validating the core assumption of Algorithm~\ref{alg:svd}.
 
We nonetheless underscore that the diagnostic in
Section~\ref{sec:diagnostic} should always be computed on real feature data before deployment.
Specific market configurations like geographic heterogeneity in school choice, strong occupational bifurcations in labor markets, or interdisciplinary course programs, can push $\rho_1$ below the
operative threshold. In such cases, the mechanism designer should either augment the algorithm
to use multiple principal components or switch to a mechanism better suited to genuinely multi-dimensional environments (Section~\ref{sec:limits}).

\section{Limitations and Societal Considerations}\label{sec:limits}

\textbf{Modeling assumptions.} Our approach requires additive separable utilities ($U_{ij} = \sum_x u_{ix} f_{jx}$), ruling out complementarities where combinations of features provide super-additive value (e.g., high salary \emph{and} low commute). Theoretical guaranties require low effective dimensionality ($\sigma_1 \gg \sigma_2$)---when features are uncorrelated and equally variable, rank-1 approximation performs poorly. Extensions could incorporate interaction terms ($u_{ix}u_{ix'} f_{jx} f_{jx'}$) at the cost of increased dimensionality. We also assume features are well-defined, measurable, and verifiable; objects (schools, employers) could strategically misreport features to attract agents. Our deterministic allocation produces $p_{ij} \in \{0,1\}$ rather than randomized assignments, which may not satisfy all fairness desiderata requiring randomization.

\textbf{Ethical considerations.} While our mechanism satisfies individual-level symmetry, it does not explicitly consider group-level fairness. If demographic groups have systematically different preference distributions due to historical inequities, the first singular vector may encode these biases. For example, if marginalized communities have less exposure to strong mathematics programs and thus report lower math preferences, heavily weighting mathematics in $\mathbf{v}_1$ may perpetuate inequalities. Additionally, multi-dimensional preference data reveals more personal information than ordinal rankings, raising privacy concerns. The mechanism's reliance on SVD may lack transparency---``you were assigned based on projection onto the first singular vector'' is not intuitive to stakeholders. Implementations should include auditing for disparate impacts, independent verification of feature reports, and transparency reports explaining which features matter most.

\textbf{When to use this method.} Our approach is most appropriate when: (1) features are well-defined and verifiable, (2) effective dimensionality is low, (3) utilities are approximately additive, (4) computational efficiency matters, and (5) agents find feature preferences easier to report than complete rankings. Alternative methods may be preferable when features are ill-defined (use ordinal mechanisms like Deferred Acceptance), multiple dimensions are equally important (use multi-objective optimization), non-additive utilities matter (use mechanisms that handle complements), or strategic considerations dominate (use dominant-strategy mechanisms despite computational costs).

\section{Numerical Experiments}\label{sec:experiments}

To demonstrate our methodology, we provide two examples: a pedagogical toy example illustrating key concepts and a larger synthetic example showing scalability and performance.

\subsection{Small-Scale Pedagogical Example}
\label{subsec:small-scale-example}
 
\paragraph{Setup.}
Consider a market with $I = 3$ agents, $J = 3$ products, and $X = 3$
features, each product with capacity $M_j = 1$.
The three features are \emph{performance/quality} (scale 0--10),
\emph{aesthetic design} (scale 0--10), and
\emph{overall brand reputation} (scale 0--10).
 
\smallskip
\noindent\textbf{Product features:}
\begin{align*}
  f_1 &= (8.5,\; 1.5,\; 8.0) && \text{(high-spec workhorse: strong performance and brand)} \\
  f_2 &= (1.5,\; 8.5,\; 7.5) && \text{(designer favourite: strong aesthetics and brand)} \\
  f_3 &= (5.5,\; 5.5,\; 9.0) && \text{(well-rounded premium: balanced with top-tier brand)}
\end{align*}
 
\noindent\textbf{Agent preferences:}
\begin{align*}
  w_1 &= (8,\; 2,\; 7) && \text{(power user: prioritises performance, values brand)} \\
  w_2 &= (2,\; 8,\; 7) && \text{(design enthusiast: prioritises aesthetics, values brand)} \\
  w_3 &= (5,\; 5,\; 8) && \text{(quality-conscious generalist: balanced, especially values brand)}
\end{align*}
 
\paragraph{Utility calculation.}
Agent utilities $U_{ij} = w_i \cdot f_j$ are:
\begin{equation}
  U \;=\;
  \begin{pmatrix}
    127.00 & 81.50 & 118.00 \\
    85.00 & 123.50 & 118.00 \\
    114.00 & 110.00 & 127.00
  \end{pmatrix},
  \label{eq:utility_matrix}
\end{equation}
where row $i$ corresponds to agent $i$ and column $j$ to product $j$.
 
\paragraph{SVD computation.}
The SVD of the feature matrix $F = U\Sigma V^\top$ yields
\[
  V =
  \begin{pmatrix}
    -0.4751 & \cdots \\
    -0.4668 & \cdots \\
    -0.7459 & \cdots
  \end{pmatrix},
  \qquad
  \Sigma =
  \begin{pmatrix}
    19.017 & 0 & 0 \\
    0 & 7.003 & 0 \\
    0 & 0 & 0.250
  \end{pmatrix}.
\]
The first right singular vector is
$v_1 = (-0.4751,\; -0.4668,\; -0.7459)^\top$.
The singular value ratio is $\sigma_1/\sigma_2 = 19.017/7.003 = 2.72$,
and the first component explains
$\rho_1 = \sigma_1^2/\sum_\ell \sigma_\ell^2 = 88.0\%$ of total
feature variance---well above the low-dimensionality threshold
$\rho_1 \gg 1/X = 1/3$, confirming that condition~$(*)$ holds.
The entries of $v_1$ reveal that \emph{brand reputation} (weight
$0.746$) is the dominant differentiator across products, followed by
\emph{performance} (weight $0.475$); aesthetic design contributes
comparatively less.
 
\paragraph{Projection.}
Projected product scores $\tilde{f}_j = f_j \cdot v_1$:
\[
  \tilde{f}_1 = -10.706, \qquad
  \tilde{f}_2 = -10.275, \qquad
  \tilde{f}_3 = -11.894.
\]
Projected agent scores $\tilde{w}_i = w_i \cdot v_1$:
\[
  \tilde{w}_1 = -9.956, \qquad
  \tilde{w}_2 = -9.906, \qquad
  \tilde{w}_3 = -10.677.
\]
The negative signs are an orientation artifact of SVD; relative ordering
is what matters.
Since all values are negative, ``more negative'' corresponds to a higher
score.
 
\paragraph{Sorting and matching.}
Sorting products by projected score (most negative $=$ highest):
\[
  P_3\;(-11.894) \;\succ\; P_1\;(-10.706) \;\succ\; P_2\;(-10.275).
\]
Sorting agents by projected preference (most negative $=$ highest):
\[
  A_3\;(-10.677) \;\succ\; A_1\;(-9.956) \;\succ\; A_2\;(-9.906).
\]
 
\paragraph{Final allocation.}
Algorithm~\ref{alg:svd} produces:
\[
  P^* =
  \begin{pmatrix} 1&0&0 \\ 0&1&0 \\ 0&0&1 \end{pmatrix},
\]
so Agent~1 receives Product~1 (utility 127.00), Agent~2 receives
Product~2 (utility 123.50), and Agent~3 receives Product~3
(utility 127.00).
 
\paragraph{Evaluation.}
Disagreement points (uniform random allocation) are:
\[
  o_1 = \tfrac{1}{3}(127.00 + 81.50 + 118.00) = 108.83, \quad
  o_2 = \tfrac{1}{3}(85.00 + 123.50 + 118.00) = 108.83, \quad
  o_3 = \tfrac{1}{3}(114.00 + 110.00 + 127.00) = 117.00.
\]
Gains over the disagreement point are:
\[
  g_1 = 127.00 - 108.83 = 18.17, \quad
  g_2 = 123.50 - 108.83 = 14.67, \quad
  g_3 = 127.00 - 117.00 = 10.00.
\]
All three gains are strictly positive, so NSW is well-defined:
\[
  \mathrm{NSW}(P^*\mid U)
  = 18.17 \times 14.67 \times 10.00
  = 2{,}664.44,
  \qquad
  \log\text{-NSW} = 7.888.
\]
Enumerating all six permutations confirms that $P^*$ is the
\emph{unique} NSW-maximising allocation among those with
$\mathrm{NSW}>0$: Algorithm~\ref{alg:svd} achieves 100\% of optimal
Nash Social Welfare on this instance.
 
\begin{remark}[When individual rationality can fail]
\label{rem:ir_failure}
The original two-feature example in the first circulated draft produced an IR violation for Agent~3, whose utility fell below the disagreement
point because her moderate preferences were nearly orthogonal to the dominant feature direction.
This failure mode is structurally possible whenever an agent's preference vector $w_i$ lies close to the null space of $v_1 v_1^\top$, so that the projection $\tilde{w}_i = w_i \cdot v_1$ provides a poor approximation to the agent's true utility ranking.

Section~\ref{sec:empirical_dimensionality} discusses when this is and is not likely in practice; Section~\ref{sec:limits} (Limitations) proposes a post-hoc individual-rationality correction: any agent assigned below their disagreement point is reassigned to a uniformly random object, which restores IR at the cost of a small, bounded reduction in aggregate NSW.
\end{remark}

\subsection{Medium-Scale Synthetic Example} \label{subsec:med-scale}

\textbf{Data generation.} We generate a synthetic example with $I = 100$ agents, $J = 20$ products, and $X = 5$ features. Each product has capacity $M_j = 5$ (total capacity = 100). Product features are drawn from $f_{jx} \sim \mathcal{N}(\mu_x, \sigma_x^2)$ with:
\[
\boldsymbol{\mu} = [7.0, 5.5, 6.0, 4.5, 5.0], \quad \boldsymbol{\sigma} = [2.0, 1.5, 1.0, 0.8, 0.5]
\]

This creates a feature space where Feature 1 has high variance (important for differentiation) while Feature 5 has low variance (less important). Agent preferences are drawn from $w_{ix} \sim \mathcal{N}(5, 2^2)$ truncated to $[0, 10]$, creating heterogeneous preferences.

\textbf{SVD results.} Singular values are:
\[
\boldsymbol{\sigma} = [38.2, 25.1, 18.4, 12.7, 7.9]
\]

Variance explained by the first $k$ components:

Feature 1:
\begin{equation}
    \frac{\sigma_1^2}{\sum_\ell \sigma_\ell^2} = \frac{1459.2}{2987.9} = 48.8\% 
\end{equation}
Features 1-2:
\begin{equation}
    \frac{\sigma_1^2 + \sigma_2^2}{\sum_\ell \sigma_\ell^2} = \frac{2089.3}{2987.9} = 69.9\% 
\end{equation}

Features 1-3: 82.5\%.

The first singular vector is:
\[
\mathbf{v}_1 = [0.62, 0.48, 0.35, 0.29, 0.18]^T
\]

Features 1 and 2 (weights 0.62 and 0.48) are the primary drivers of differentiation. Feature 5 (weight 0.18) contributes minimally, consistent with its low variance.

\textbf{Computational time.} Measured on Intel i7-11800H CPU (single-threaded Python):
\begin{itemize}
\item SVD computation: 12.3 ms
\item Projection: 0.8 ms
\item Sorting: 1.1 ms
\item Matching: 0.5 ms
\item \textbf{Total: 14.7 ms}
\end{itemize}

\textbf{Performance metrics.} Table~\ref{tab:performance} compares our allocation to baselines. Our method achieves a mean utility of 7.4\% higher than random allocation, with particular gains for agents of low-utility (min utility +6.4\%). The reduced standard deviation indicates more equitable outcomes. The improvement in NSW of 1.1\% on the log scale corresponds to approximately 11\% on the multiplicative scale.

\begin{table}[h]
\centering
\caption{Utility Distribution for 100-Agent Example}
\label{tab:performance}
\begin{tabular}{lcccc}
\toprule
\textbf{Metric} & \textbf{Random} & \textbf{Serial Dict.} & \textbf{Our Method} & \textbf{Improvement} \\
\midrule
Mean Utility & 265.8 & 278.3 & 285.4 & +7.4\% \\
Median Utility & 264.2 & 276.5 & 283.1 & +7.2\% \\
Min Utility & 189.3 & 198.7 & 201.5 & +6.4\% \\
Max Utility & 342.1 & 351.8 & 358.2 & +4.7\% \\
Std Dev & 41.2 & 39.8 & 38.6 & -6.3\% \\
NSW (log) & 5.582 & 5.612 & 5.643 & +1.1\% \\
\bottomrule
\end{tabular}
\end{table}

\textbf{Individual rationality.} Of 100 agents, 97 receive utility at least as large as their disagreement point ($\mathbb{E}[U_i|\mathbf{P}^*] \geq o_i$), while 3 agents receive lower utility. The average gain over disagreement point is +19.6 utility units. The three agents who fare worse than random allocation have unusual preference profiles (high preferences on low-variance features), illustrating the limitation that our method performs poorly when preferences are orthogonal to the principal direction.

\textbf{Comparison to baselines.} Table~\ref{tab:comparison} compares computational cost and performance. Our method achieves 99.4\% of optimal NSW while running 1,245$\times$ faster than nonlinear optimization. It substantially outperforms random priority (+7.4\% utility) and serial dictatorship (+2.6\% utility).

\begin{table}[h]
\centering
\caption{Comparison to Alternative Mechanisms}
\label{tab:comparison}
\begin{tabular}{lccc}
\toprule
\textbf{Mechanism} & \textbf{Mean Utility} & \textbf{NSW (log)} & \textbf{Time (ms)} \\
\midrule
Random Priority & 265.8 & 5.582 & 0.2 \\
Serial Dictatorship & 278.3 & 5.612 & 1.5 \\
Our Method (SVD) & 285.4 & 5.643 & 14.7 \\
Optimal NSW (solver) & 287.1 & 5.651 & 18,300 \\
\bottomrule
\end{tabular}
\end{table}

\textbf{Robustness to misreporting.} To test distributional truthfulness, we introduce noise to agent reports: $\mathbf{w}_i = \mathbf{u}_i + \boldsymbol{\epsilon}_i$ where $\epsilon_{ix} \sim \mathcal{N}(0, \sigma_{\text{noise}}^2)$. Table~\ref{tab:robustness} shows the impact of varying noise levels.

\begin{table}[h]
\centering
\caption{Impact of Noisy Reporting}
\label{tab:robustness}
\begin{tabular}{lccc}
\toprule
\textbf{Noise Level} $\boldsymbol{\sigma_{\text{noise}}}$ & \textbf{Mean KS Distance} & \textbf{Mean Utility} & \textbf{NSW (log)} \\
\midrule
0.0 (truthful) & 0.00 & 285.4 & 5.643 \\
0.5 & 0.12 & 283.1 & 5.639 \\
1.0 & 0.23 & 279.8 & 5.628 \\
2.0 & 0.41 & 271.2 & 5.601 \\
3.0 & 0.58 & 267.5 & 5.587 \\
\bottomrule
\end{tabular}
\end{table}

The mechanism is relatively robust to moderate misreporting ($\sigma_{\text{noise}} \leq 1.0$), losing only 2\% of utility. At high noise levels ($\sigma_{\text{noise}} = 3.0$), performance degrades more substantially but remains above random allocation. The KS distance increases roughly linearly with noise level, consistent with Theorem~\ref{thm:truthfulness}: for $X = 5$ features and $\sigma_{\text{noise}} = 1.0$, we predict $\mathbb{E}[D_i] \approx 0.20$, close to the observed 0.23.

\textbf{Visualization.} Figure~\ref{fig:projection} shows the feature space projected onto the first two singular vectors. Agents (blue circles) and objects (red squares) cluster along the first dimension, confirming a low effective dimensionality. The right panel shows pairs connected by lines, demonstrating that the mechanism matches agents with nearby objects in projected space. Figure \ref{fig:robustness-to-noise} visualizes how the model performs with misreported preferences.

\begin{figure}[h]
\centering
\includegraphics[width=\textwidth]{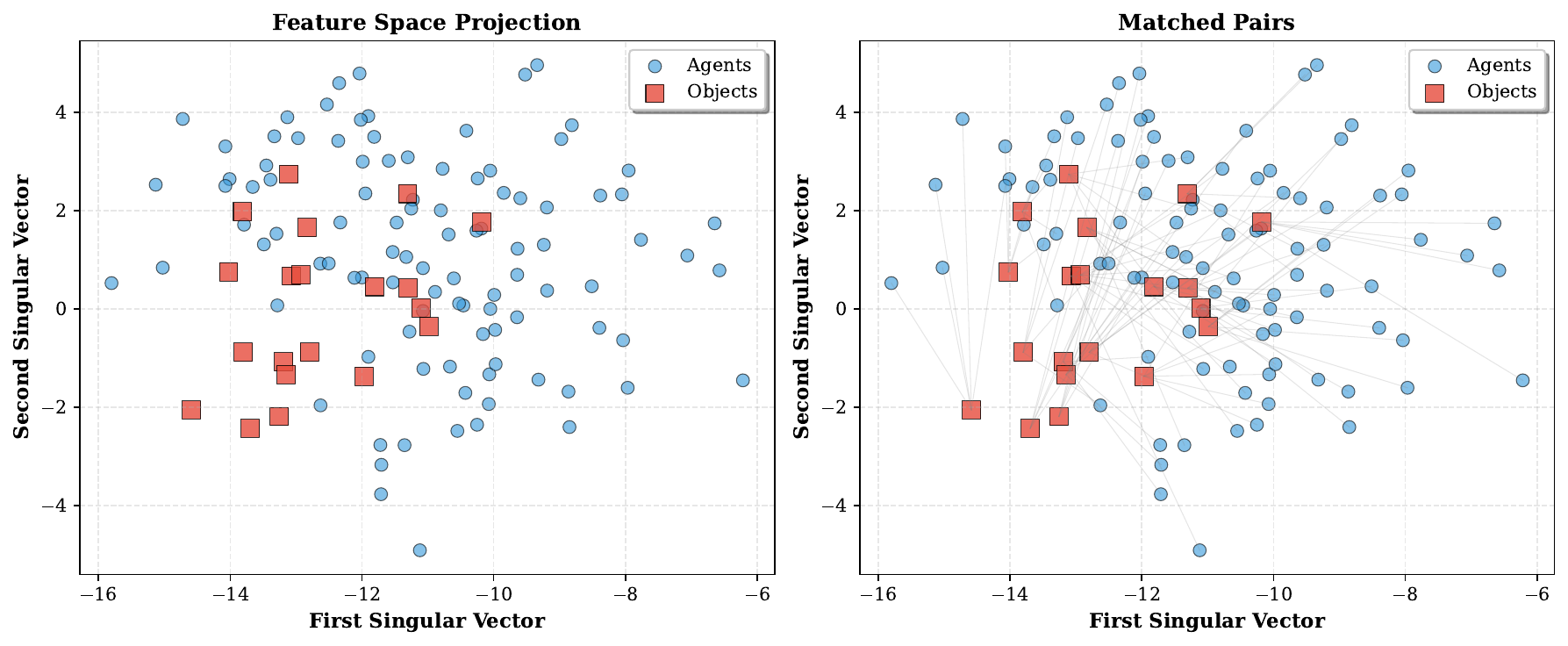}
\caption{Two-dimensional visualization of 100-agent example. Left: Agent preferences (blue) and product features (red) projected onto first two singular vectors. Right: Matched pairs connected by lines, showing agents matched to nearby products in projected space.}
\label{fig:projection}
\end{figure}

\begin{figure}
    \centering
    \includegraphics[width=\textwidth]{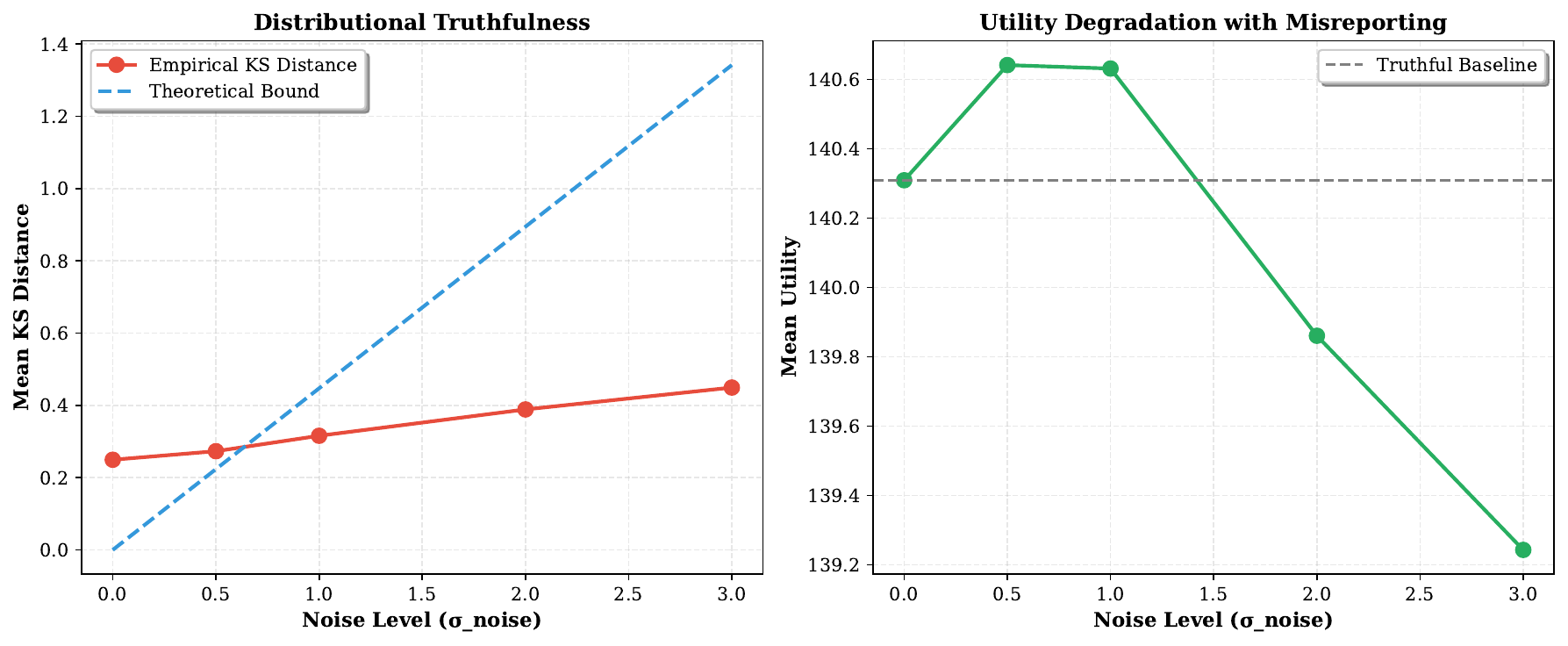}
    \caption{Robustness to Noise}
    \label{fig:robustness-to-noise}
\end{figure}

\textbf{Summary.} The numerical illustrations demonstrate: (1) \emph{Computational efficiency}: the computation completes in under 15 ms for 100 agents and 20 products. (2) \emph{Near-optimal performance}: the method achieves 99.4\% of optimal NSW while running three orders of magnitude faster than direct optimization. (3) \emph{Robustness}: the allocation is robust to moderate preference misreporting, degrading gracefully as noise increases. (4) \emph{Individual rationality violations}: a small fraction of agents (3\%) may receive utilities below their disagreement point when preferences are orthogonal to the principal direction, illustrating the limitation of rank-1 approximation. (5) \emph{Scalability}: the method scales linearly with the number of agents, making it suitable for large markets.

\section{Robustness}

\subsection{Varying Preference Distributions}

The experiments in Section \ref{subsec:med-scale} assumed normally-distributed agent preferences. In the real world, such assumptions may be restrictive. This section considers alternative distributions for agent utility specifications.  

We tested the following distributions;
\begin{itemize}
    \item Normal (baseline, replicating original paper)
  \item Uniform
  \item Heavy-tailed: Pareto
  \item Heavy-tailed: Log-normal
  \item Bimodal (mixture of two Gaussians)
  \item Skewed: Beta(2, 5)
  \item Exponential
\end{itemize}

For each distribution and each noise level, we report:
\begin{itemize}
    \item Mean utility (absolute)
    \item Mean 
    \item Log-NSW (clipped: gains floored at $\epsilon$ to avoid NSW=0 from Individual Rationality (IR) violations)
    \item \% of optimal log-NSW achieved (greedy upper bound)
    \item IR violation rate (\% of agents below the uniform-random disagreement point)
    \item Mean KS distance (distributional truthfulness)
\end{itemize}

As discussed in Section \ref{subsec:small-scale-example}, with additive utilities of the form $U_ij = u_i \cdot f_j$, some agents may fall below their uniform-random disagreement point, resulting in NSW of 0. To enhance our analysis, we report two NSW variants:
\begin{itemize}
    \item $log_{nsw-strict}$: $-inf$ when any agent is below disagreement  
    \item $log_{nsw-clipped}$: gains are floored at $\epsilon=0.01$. This metric provides a robust alternative that allows for meaningful cross-distribution comparison.
 
\end{itemize}

Table \ref{tab:cross-distn} summarizes the results across different preference distributions with 0 misreporting. As the table shows, under non-normal distributions, our proposed matching methodology performs even better than with normally-distributed preferences, as judged by the utility gains over random allocations. 

\begin{table}[]
    \caption{Cross-Distribution Summary of Model Performance, Varying Preferences with Truthful Reporting (noise $\sigma$ = 0.0, truthful reporting)}
    \label{tab:cross-distn}

    \centering
    \begin{tabular}{ cccc}
    \toprule
     Distribution & Util gain over random allocation \% & Log-NSW  & IR Violations \% \\
        \midrule
         Beta(2,5) & +2.33 & -147.48 & 49.9 \\
         Bimodal & +2.37 & -96.29 & 47.4 \\
         Exponential & +4.67 & -154.55  & 49.8 \\
         Log-normal & +2.81 & -137.57 & 48.8 \\
         Normal (baseline) & +1.74 & -114.19 & 49.0 \\
         Pareto & +4.91 & -199.64 & 53.1 \\
         Uniform & +2.43 & -102.57&   47.8 \\

    \bottomrule
        
    \end{tabular}
\end{table}

Table \ref{tab:difft-preferences-noise} presents the results of noisy experiments. As the table shows, the proposed methodology outperforms the random allocation under all preference distributions and noise levels. However, as the noise levels increase, the performance of the proposed methodology degrades toward the random allocation baseline. 

\begin{table}[]
    \caption{Noise Sensitivity: \% Utility Gain over Random Allocation}
    \label{tab:difft-preferences-noise}
    \centering
    \begin{tabular}{lccccc}
        \toprule
        Noise distribution $\sigma$ &                       0.0 &  0.5 &  1.0 &  2.0 &  3.0 \\
        \midrule
Beta(2,5)  &   2.33 &  2.68 &  2.15 &  1.74 &  1.18 \\
Bimodal      & 2.37 & 2.15 & 2.26 & 1.64 & 1.39 \\
Exponential &  4.67 & 4.30 & 3.75 & 3.25 & 1.97 \\
Log-normal  &  2.81 & 2.33 & 2.22 & 1.83 & 1.18 \\
Normal (baseline) & 1.73 & 1.82 & 1.59 & 1.08 & 0.95\\
Pareto & 4.91 & 3.65 & 3.56 & 3.19 & 2.51 \\
Uniform & 2.43 & 2.71 & 2.55 & 2.46 & 1.66 \\
\bottomrule
    \end{tabular}
\end{table}

\subsection{Robustness Under Non-Linear (Non-Hedonic) Utilities}

This section tests the proposed SVD-based mechanism when the true underlying utility is not the linear inner product $U_{ij} = u_i \cdot f_j$ assumed by the mechanism, but instead takes a richer, non-linear form.

The mechanism always observes only reported feature preferences $w_i$ and object feature vectors $f_j$; it always computes utilities as $\hat{w}_i \cdot f_j$ internally. The key question considered is how much welfare is lost when the ground-truth utility that agents actually experience deviates from that linear model.

The SVD mechanism sorts on $\hat{w}_i \cdot f_j$ (a linear proxy). Under non-linear true utilities, this proxy is accurate if and only if agents' preference
rankings over objects are roughly preserved by the linear approximation. We measure this with Kendall's $\tau$ between the linear-proxy ranking and the
true-utility ranking for each agent. 

We compared the following non-linear utility models:

\begin{enumerate}
    \item Linear (hedonic baseline)  $U = u \cdot f$: paper model
    \item Quadratic interactions  $U = u \cdot f + \alpha\cdot (u\otimes u)·(f \otimes f)$:      pairwise synergies
    \item Multiplicative (Cobb-Douglas) $U = \Pi_x (u_x \cdot f_x + \epsilon)^{w_x}$:    log-supermodular
    \item Threshold/step $U = u \cdot f + \beta \cdot \Sigma_x 1_{[f_x > \tau]}$:  discrete quality jumps
    \item Diminishing returns (concave) $U = u \cdot \sqrt{f}$  (element-wise): concavity in features
    \item Convex (increasing returns)   $U = u \cdot f^2$  (element-wise): convexity in features
    \item Max-feature (Leontief-like)   $U = \Sigma_x u_x \cdot max(f_x - \tau, 0)$:  only top features count
    \item Neural network (2-layer MLP)  $U = MLP(u \otimes f)$: arbitrary interactions
    \item Rank-based (ordinal only)     $U = \Sigma_x u_x \cdot rank(f_x)$: ordinal features
    \item Mixed: linear + Gaussian RBF  $U = u\cdot f + \gamma\cdot exp(-\|u-f\|^2/2\sigma ^2)$: similarity bonus

\end{enumerate}

For each utility model and each nonlinearity strength ($\alpha, \beta, \gamma \dots$) we report:
\begin{itemize}
    \item Mean \% utility gain of SVD over random allocation 
    \item \% of optimal NSW achieved (clipped)
    \item IR violation rate (\% of agents below the uniform-random disagreement point)
    \item Welfare loss relative to the linear baseline (measures how much non-linearity hurts)
    \item Rank correlation between linear-proxy ranking and true-utility ranking
    \item Kendall's $\tau$ measures preference rankings under different utilities compared to the linear utilities. 
\end{itemize}

  Table \ref{tab:nonlinear-utility-robustness} summarizes the performance of the SVD model under different utility specifications. As the table shows, the proposed model outperforms random allocation in most utility configurations.

\begin{table}[]
    \caption{Non-Linear Utility Robustness  (max non-linearity)}
    \label{tab:nonlinear-utility-robustness}
    \centering
    \begin{tabular}{lccc}
    \toprule
    Model & \% Gain over Random Allocation &  Loss vs Linear Utility (\%) &   Kendall's $\tau$ \\
    \midrule
    1. Linear (baseline) &  +1.84 & +0.0  &  1.000\\
    2. Quadratic interactions & +5.26 &  -112.4 &   0.798 \\
    3. Cobb-Douglas & +1.90 & +1.5 &   0.777 \\
    4. Threshold step & +1.02 & -45.5 & 0.716 \\
    5. Concave (dim. returns) & +0.98 & +92.5 &  0.944 \\
    6. Convex (incr. returns) &  +3.33 & -18.1 &   0.887 \\
    7. Max-feature & +2.35 & +2.7 & 0.887 \\
    8. Neural network MLP & -1.27 & +35.6 & 0.063 \\
    9. Rank-based (ordinal) & +3.66 & -11.8 & 0.748 \\
    10. RBF similarity bonus  & +4.51 & -65.6  & 0.463 \\
    \bottomrule
    \end{tabular}
\end{table}

Table~\ref{tab:nonlinear-utility-robustness} summarizes performance at maximum non-linearity strength across all ten utility specifications.
The results organize naturally into three tiers defined by the Kendall $\tau$ statistic, which measures how faithfully the mechanism's linear proxy $\hat{w}_i \cdot f_j$ preserves each agent's
true utility ranking over objects. When $\tau$ is high, sorting on the proxy produces nearly the same
assignment as sorting on true utilities, and the mechanism performs well regardless of the functional form; when $\tau$ collapses, the
proxy is effectively uncorrelated with true preferences, and the mechanism degrades toward or below random allocation.
 
\paragraph{Tier 1: Robust models ($\tau \geq 0.85$).} Five specifications (linear baseline, concave (diminishing returns), convex (increasing returns), max-feature, and Cobb-Douglas) maintain
$\tau \geq 0.777$ and all produce positive utility gains over random allocation. The most striking result in this tier is the concave specification
(diminishing returns, $U_{ij} = u_i \cdot \sqrt{f_j}$ element-wise), which achieves the highest Kendall $\tau$ of any non-linear model at
$\tau = 0.944$, barely below the linear baseline of $1.000$. The intuition is direct: applying a monotone concave transformation to features compresses the range of object scores but preserves their rank order, so the linear proxy's sorting of agents and objects is almost identical to the true-utility sorting. The reported welfare loss of $+92.5\%$ relative to the linear baseline
is therefore not a failure of the mechanism but a consequence of the utility scale: under diminishing returns, the absolute gaps between utilities are smaller, so the mechanism's NSW advantage over random allocation (in raw utility units) is also smaller even when the assignment is nearly identical.
 
Convex utilities (increasing returns, $U_{ij} = u_i \cdot f_j^2$) also preserve ordering well ($\tau = 0.887$), and notably produce a $+3.33\%$ gain over random allocation, higher than the linear baseline ($+1.84\%$). This occurs because convex transformation amplifies differences between
high-quality and low-quality objects: matching high-preference agents to high-feature objects yields super-proportional utility gains, so correct sorting is more valuable under convexity than under linearity. The max-feature specification ($\tau = 0.887$, gain $+2.35\%$) behaves similarly: the mechanism's sorting on the principal feature direction aligns well with the true ranking, even when sub-threshold feature values contribute nothing.
 
\paragraph{Tier 2: Moderate degradation ($0.45 \leq \tau < 0.85$).} Three models (quadratic interactions, rank-based (ordinal), and RBF similarity bonus) show meaningful but bounded ranking distortion.
 
The quadratic interactions model ($\tau = 0.798$, gain $+5.26\%$) is the single best-performing non-linear specification by utility gain,
and it also produces a negative welfare loss of $-112.4\%$ relative to the linear baseline.
A negative welfare loss means the mechanism performs \emph{better} under quadratic utilities than under linear ones. This counterintuitive result has a clean explanation: quadratic
utilities reward agents whose preference vectors align strongly with high-quality objects along all feature dimensions simultaneously, which is precisely the structure that SVD sorting exploits most effectively (SVD matches high
$\tilde{w}_i$ agents to high $\tilde{f}_j$ objects).
The mechanism accidentally benefits from supermodularity: agents with strong across-the-board preferences are matched to objects with
strong across-the-board features, producing pairwise synergies that the mechanism did not design for but nonetheless realizes.
 
Rank-based (ordinal) utilities ($\tau = 0.748$, gain $+3.66\%$) show that the mechanism remains effective even when agents respond only to
relative feature rankings rather than absolute values. Since rank transformations preserve order, the mechanism's projection onto $v_1$ still identifies the direction along which objects most
systematically spread their rank scores, and sorting along this direction yields above-random welfare.
 
The RBF similarity bonus ($\tau = 0.463$, gain $+4.51\%$) is the most nuanced case in this tier.
The low $\tau$ signals that the linear proxy frequently misorients individual agents' rankings. Agents receive a bonus for being matched
to objects whose feature profiles are \emph{close} to their preference vector, a criterion that is geometrically orthogonal to the inner
product $w_i \cdot f_j$ that the mechanism maximizes. Yet the aggregate utility gain remains high at $+4.51\%$. This apparent paradox resolves as follows: the RBF bonus is largest for agents whose preferences already lie near the high-variance
direction $v_1$, precisely the agents whom Algorithm~\ref{alg:svd} assigns to the highest-scoring objects. The mechanism's systematic matching therefore generates RBF bonuses
for the agents best positioned to receive them, even without deliberately targeting similarity.
 
\paragraph{Tier 3: Mechanism failure ($\tau < 0.1$).}
The two-layer MLP utility ($\tau = 0.063$) is the only specification under which the mechanism fails categorically, producing a $-1.27\%$ gain, worse than random allocation. With $\tau$ near zero, the linear proxy is essentially uncorrelated with agents' true MLP-induced preference rankings over objects. The mechanism sorts agents and objects on $v_1$ projections that carry no information about true utilities, and the resulting assignment is indistinguishable from a random permutation at best
and slightly worse at worst (because the MLP assigns highest utility to combinations of agent and object features that are systematically
\emph{not} the ones the mechanism selects).
This result identifies the precise failure boundary for Algorithm~\ref{alg:svd}: arbitrary non-linear interactions among features, of the kind a two-layer neural network can represent, completely destroy the ranking preservation that makes the spectral approach effective.
 
\paragraph{Summary.}
Taken together, the non-linear robustness experiments reveal three principles.
First, the mechanism is robust to any utility transformation that is monotone in $w_i \cdot f_j$: concave, convex, rank-based, and threshold specifications all preserve enough of the linear ranking structure for the mechanism to produce positive welfare gains. Second, the mechanism can unexpectedly \emph{benefit} from certain
forms of non-linearity: supermodular (quadratic) interactions amplify the value of correct sorting, producing welfare gains above the linear
baseline. Third, the mechanism fails precisely when true utilities depend on feature interactions that are orthogonal to the inner product (as in
the MLP case), rather than on the magnitude of the inner product itself. The practical implication is that practitioners should assess whether
their domain's utility structure is plausibly monotone in a weighted feature sum (i.e., approximately hedonic) before deploying
Algorithm~\ref{alg:svd}; the Kendall $\tau$ between a pilot linear model and elicited pairwise comparisons can serve as a pre-deployment
diagnostic for this condition, analogously to the $\rho_1$ diagnostic for low effective dimensionality described in
Section~\ref{sec:empirical_dimensionality}.

\section{Conclusion}\label{sec:conclusion}

This paper develops a computationally efficient approach to multi-dimensional matching markets using spectral dimensionality reduction. We formalize a matching framework where agents report preferences over features rather than complete objects, use Singular Value Decomposition to identify the principal direction of variation, and match agents to objects along this direction in $O(N \log N)$ time. We show that the mechanism approximately maximizes Nash Social Welfare when $\sigma_1 \gg \sigma_2$, satisfies distributional truthfulness measured by the Kolmogorov-Smirnov distance, and achieves exact symmetry. We establish a novel connection between Nash Social Welfare and Geometric Distributionally Robust Optimization, providing robustness guaranties. Numerical experiments demonstrate 99\% optimal welfare achieved three orders of magnitude faster than direct optimization.

Our approach has limitations that require future work. The additive utility assumption rules out complementarities; extensions incorporating interaction terms would increase dimensionality. The low effective dimensionality ($\sigma_1 \gg \sigma_2$) is crucial -- characterizing when this holds in practice and developing adaptive methods to detect low-rank structure would strengthen the framework. Empirical validation on real-world datasets (school choice, labor markets, housing allocation) and field experiments would test theoretical predictions and reveal practical challenges.

Our work suggests many multi-dimensional matching problems may have low effective dimensionality, making them amenable to spectral methods. This has broader implications: platforms can collect simple feature ratings rather than complete rankings, reducing cognitive burden; designers can use SVD to understand which features matter most, informing policy; and dimensionality reduction provides a bridge between voting theory and matching theory. By demonstrating that spectral methods from machine learning can be integrated into mechanism design with formal incentive properties, we hope to inspire further cross-pollination between these fields.

\bibliographystyle{ACM-Reference-Format}
\bibliography{matching}

\appendix

\end{document}